\documentclass[letterpaper]{article}
\usepackage{aaai2027}
\usepackage[hyphens]{url}
\usepackage{graphicx}
\usepackage{tikz}
\usetikzlibrary{arrows.meta,positioning,fit,backgrounds}
\usepackage{natbib}
\usepackage{caption}
\usepackage{algorithm}
\usepackage{algorithmic}
\usepackage{booktabs}
\usepackage{amsmath,amssymb,amsfonts,amsthm}
\usepackage{array}
\usepackage{xspace}
\usepackage{longtable}
\usepackage{threeparttable}

\newtheoremstyle{ieeedef}
  {4pt}{4pt}{\normalfont}{0pt}{\bfseries\itshape}{:}{0.4em}{}
\theoremstyle{ieeedef}
\newtheorem{theorem}{Theorem}
\newtheorem{proposition}[theorem]{Proposition}
\newtheorem{corollary}[theorem]{Corollary}
\newtheorem{lemma}[theorem]{Lemma}
\newtheorem{definition}[theorem]{Definition}

\newcommand{\GS}{\textsc{GrayShield}\xspace}
\newcommand{\GSabbr}{\textsc{GS}\xspace}
\newcommand{\PMabbr}{\textsc{PM}\xspace}
\newcommand{\RFabbr}{\textsc{RF}\xspace}
\newcommand{\FTabbr}{\textsc{FT}\xspace}
\newcommand{\GNabbr}{\textsc{GN}\xspace}
\newcommand{\PTQabbr}{\textsc{PTQ}\xspace}
\newcommand{\SWPabbr}{\textsc{SWP}\xspace}
\newcommand{\BitAcc}{\mathrm{BitAcc}}
\newcommand{\Drel}{D_{\mathrm{rel}}}
\newcommand{\MI}{\mathcal{I}}
\newcommand{\Ham}{\mathrm{Ham}}
\newcommand{\calW}{\mathcal{W}}
\newcommand{\R}{\mathbb{R}}
\newcommand{\E}{\mathbb{E}}
\newcommand{\Prob}{\mathbb{P}}
\newcommand{\numel}{\mathrm{numel}}
\newcommand{\lsbop}[1]{\mathrm{lsb}_{#1}}

\newcommand{\BSC}{\mathrm{BSC}}
\newcommand{\Hb}{H_b}
\newcommand{\RR}{\mathrm{RR}}
\newcommand{\abs}[1]{\lvert #1 \rvert}
\newcolumntype{C}[1]{>{\centering\arraybackslash}p{#1}}

\usepackage{soul}

\newcommand{\RETURN}{\textbf{return} }

\title{GrayShield: Bit-Level Sanitization for Transformer Model Supply-Chain Security}

\author{
    Armstrong Foundjem\corresponding\textsuperscript{\rm 1},
    Tsung-Hsien Chuang\textsuperscript{\rm 2},
    Foutse Khomh\textsuperscript{\rm 1},
    Mohamed Amine Merzouk\textsuperscript{\rm 3}
}
\affiliations{
    \textsuperscript{\rm 1}Mila - Quebec Artificial Intelligence Institute and Polytechnique de Montréal, QC, Canada\\
    \{a.foundjem, foutse.khomh\}@polymtl.ca\\
    \textsuperscript{\rm 2}Research Center for Information Technology Innovation,
    Academia Sinica,
    Taipei, Taiwan\\
    tsung-hsien.chuang@sinica.edu.tw\\
    \textsuperscript{\rm 3}Mila - Quebec Artificial Intelligence Institute and McGill University\\
    Montréal, QC, Canada\\
    mohamed.a.merzouk@mcgill.ca
}

\begin{document}
\maketitle

\begin{abstract}
Transformer models such as BERT and Vision Transformer~(ViT)
achieve strong performance via densely parameterized attention
backbones. However, the least significant bits~(LSBs) of their 32-bit
floating-point weights can be abused as covert channels to conceal
malicious payloads, posing a serious threat to the AI model supply
chain. We propose \GS (\GSabbr), a lightweight, post-training,
zero-data sanitization method that completely replaces the declared
mantissa-LSB channel with a Gray-code-guided low-transition sequence.
Complete payload-independent overwrite, whether keyed or public, makes
the sanitized target bits independent of the embedded payload and gives
that declared channel zero capacity. Gray coding supplies overwrite
structure, while a keyed per-tensor phase supplies pattern diversity.
Benchmarked against seven post-training defenses on four Transformer
model presets and two real-world malware payloads, \GSabbr maintains
sub-$1\%$ accuracy impact and achieves $49.96\pm0.66$ percentage-point
Recovery Reduction (RR) under five implemented attacker variants.
Because pre-defense recovery is effectively $100\%$, RR near 50
percentage points corresponds to post-sanitization bit accuracy at
binary chance. Its main empirical advantage is stable near-chance
sanitization with substantially smaller weight-distribution shift than
the evaluated near-chance baselines PatternMask (PM) and
Post-Training Quantization (PTQ).
\end{abstract}

\section{Introduction}
\label{sec:intro}

The rapid adoption of pretrained deep learning models has
transformed computer vision, natural language processing, and
cybersecurity~\cite{vaswani2017attention,devlin2019bert,dosovitskiy2021image}.
Repositories such as Hugging Face Model Hub now host hundreds of
thousands of model checkpoints~\cite{huggingface2026hub}, enabling
practitioners to deploy state-of-the-art Transformer architectures
without retraining.
However, this convenience introduces a new and underappreciated
attack surface: the potential for model weights to act as covert
carriers for malicious
content~\cite{wang2021evilmodel,liu2020stegonet,hitaj2022maleficnet}.
Unlike traditional malware distribution, large-scale model files
contain millions of 32-bit floating-point parameters, providing vast capacity for hiding malicious code with negligible impact on
downstream accuracy. EvilModel~\cite{wang2021evilmodel} embeds up to 36.9\,MB of malware
in AlexNet through parameter substitution with minimal accuracy
degradation. StegoNet uses neural-network parameters, including LSB substitution, as payload channels, whereas MaleficNet combines
spread-spectrum embedding with LDPC error correction
\cite{liu2020stegonet,hitaj2022maleficnet}.
Model X-Ray~\cite{gilkarov2026modelxray} introduced detection
frameworks using few-shot learning to identify payloads hidden through LSB substitution.
Casey et al.~\cite{casey2024huggingface} report widespread unsafe
serialization and manually confirm 14 malicious Hugging Face files;
a large-scale model-hub study likewise treats model distribution as a
software supply-chain boundary~\cite{zhao2024modelsarecodes}.
We refer to these threats as \emph{maleficent neural
networks}: models that are active carriers
of malware, not merely targets of adversarial attacks. Real-world incidents confirm the urgency. In 2022, a trojanized PyTorch dependency (\texttt{torchtriton}) was
distributed via PyPI, a supply-chain vector structurally analogous to
malicious model checkpoints~\cite{pytorch2022compromised}.

Despite growing recognition, existing defenses face fundamental
limitations.
Detection-oriented approaches identify suspicious checkpoints but do not
remove the payload~\cite{gilkarov2026modelxray}.
Post-training quantization reduces numerical precision and can overwrite
low-order information as a side effect
\cite{krishnamoorthi2018quantizing,nagel2020adaround}.
Fine-tuning is effective but expensive and requires trusted data.
Prior malware-in-weights work already includes error-correcting
codes~\cite{hitaj2022maleficnet}; the remaining gap is an explicit,
scoped channel-erasure argument for complete replacement of a declared
payload-bearing bit set. To address this gap, we propose \GS, a
Gray-code-guided bit-level sanitizer that completely overwrites the
target mantissa LSBs and uses a keyed tensor-specific phase for pattern
diversity.
We pose four research questions (RQs):

\medskip
\noindent\textbf{RQ1} (Threat Surface): At what LSB overwrite depth
does mantissa-bit payload injection remain stealthy, lossless, and
undetectable across diverse Transformer families?

\noindent\textbf{RQ2} (Defense Effectiveness): Which post-training
sanitization strategies disrupt payload recovery to chance level while
preserving downstream utility, under a Tier-1 naive attacker?

\noindent\textbf{RQ3} (Adaptive Robustness): Does \GS maintain
chance-level payload disruption when attackers strengthen their
encoding strategy with redundancy coding and burst-resilient
interleaving?

\noindent\textbf{RQ4} (Deployment Trade-Off): Is \GS Pareto-optimal
in the joint (Recovery Reduction, accuracy drop, variance, runtime)
space, and what is the recommended deployment operating point?
\medskip

\paragraph{Contributions.}
Our contributions are threefold: (i) we formalize a declared
weight-LSB covert channel and show that complete payload-independent
overwrite erases that target channel; (ii) we introduce a Gray-code
overwrite structure with a keyed per-tensor phase for low-transition
patterns and pattern diversity; and (iii) we evaluate security, utility,
stability, fidelity, and runtime across four Transformer presets, two
payloads, seven defenses, and five attacker variants, finding the
smallest $W_1$ shift among the evaluated near-chance defenses.

\section{Background and Related Work}
\label{sec:background}

\textbf{Floating-point representation and LSB steganography.}
Deep learning frameworks store model parameters as IEEE-754 binary32
(FP32) values~\cite{ieee754_2019}.
Each \texttt{float32} occupies 32 bits: one sign bit, eight biased
exponent bits, and twenty-three mantissa bits.
For a normalized FP32 value, changing the least significant fraction bit
moves the value by one unit in the last place at its current exponent;
the absolute perturbation therefore depends on that exponent.
LSB steganography~\cite{cox2007digital,fridrich2009steganography}
exploits this: an attacker reinterprets each weight as a 32-bit
unsigned integer via bitcast, replaces the lowest $x$ mantissa bits
with payload bits, and writes back:
$u_i'=(u_i\,\&\,{\sim}m_x)\,|\,(p_i\,\&\,m_x)$,
$m_x=(1\ll x)-1$.
Extraction is the exact inverse.
StegoNet directly studies LSB substitution in neural-network
parameters, and MaleficNet demonstrates coded weight payloads
\cite{liu2020stegonet,hitaj2022maleficnet}.

\textbf{Supply chain and payload characteristics.}
Public repositories distribute pretrained checkpoints via a single
API call without inspecting weight tensors.
The difficulty of sanitization scales with payload entropy:
low-entropy payloads (normalized entropy $\lesssim0.65$) tolerate
partial corruption; high-entropy payloads
(entropy $\gtrsim0.90$) require near-lossless reconstruction.

\textbf{Defenses and their limitations.}
Detection-oriented methods~\cite{gilkarov2026modelxray} identify compromised checkpoints but do not remove the payload.
Task-aware PTQ methods optimize weight-rounding decisions using
calibration data~\cite{nagel2020adaround}; our evaluated static
symmetric-INT8 quantize--dequantize baseline is a payload-disruption
comparison, not a security-certified PTQ method.
Active sanitization~\cite{dubin2023disarming}
directly perturbs weights, but perturbation-based
approaches offer no complete-overwrite guarantee for the declared
payload channel. Weight-bit integrity attacks and defenses instead
study inference failure under sparse malicious flips
\cite{yao2020deephammer,he2020defendingbitflip}; that objective differs
from erasing an arbitrary byte stream stored in otherwise functional
weights.

\textbf{Gray codes.}
A Gray code~\cite{gray1953pulse} is a bijection
$\mathrm{Gray}:\{0,\ldots,2^n-1\}\to\{0,1\}^n$ such that consecutive
codewords differ in exactly one bit:
$\Ham(\mathrm{Gray}(i),\mathrm{Gray}(i+1))=1$.
Standard reflected binary construction:
$\mathrm{Gray}(z)=z\oplus(z\gg1)$.
Two properties motivate its use here. \textit{(i) Low transition:}
adjacent unmasked codewords differ in one bit, inducing structured
rather than independently scattered overwrite values. \textit{(ii)
Balance over complete masked periods:} for
$g_i=\mathrm{Gray}(i+\mathcal{O}_\ell)\,\&\,m_x$, the low $x$ output
bits are balanced over a period of $2^{x+1}$; the phase rotates the
finite tensor segment and diversifies patterns. These properties
motivate the design and its empirical fidelity; zero target-channel
capacity follows from complete payload-independent overwrite, not from
Gray coding or phase secrecy.

Related work on coding theory and
steganography~\cite{cox2007digital,fridrich2009steganography,shannon1948mathematical}
establishes that reliable communication requires a positive channel
capacity.
We are not aware of prior work that explicitly formulates complete
replacement of a declared neural-weight payload channel as channel
erasure.
Model compression
methods~\cite{han2015deep,frantar2022gptq} are designed for
efficiency, not security, and provide no guarantees against structured
payload embedding.

\section{GrayShield}
\label{sec:method}

\begin{figure*}[!ht]
  \centering
\includegraphics[width=\linewidth]{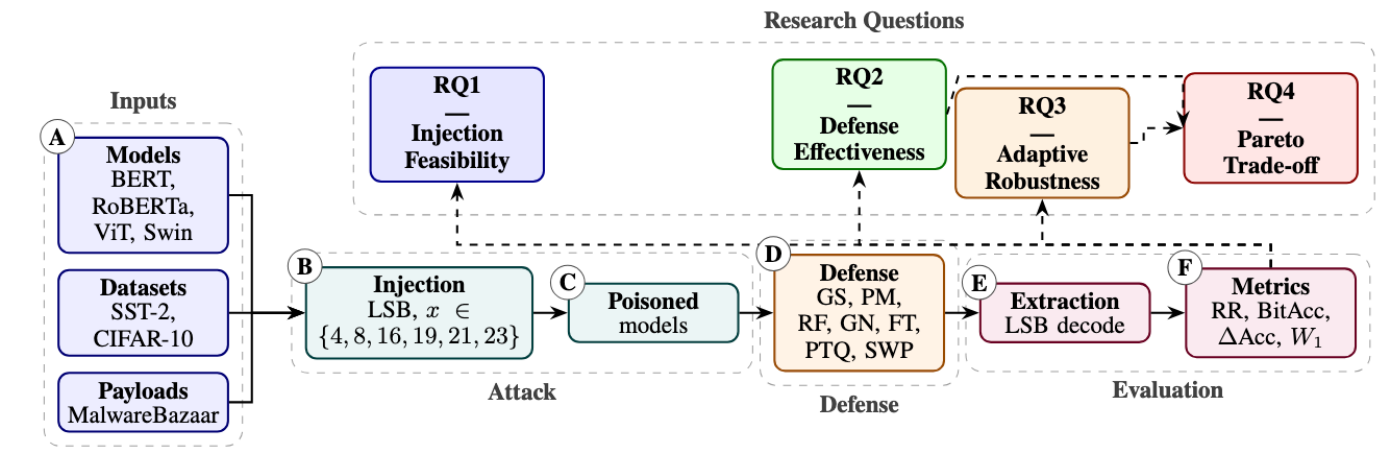}
  \caption{GrayShield evaluation pipeline.
(A)~Inputs define the empirical study space: pretrained models,
evaluation datasets, and malware payloads from MalwareBazaar.
(B)~LSB injection overwrites the lowest $x$ mantissa bits at
depths $x\in\{4,8,16,19,21,23\}$ to produce (C)~poisoned models.
(D)~Seven sanitization strategies transform poisoned checkpoints
into defended models.
(E)~Payload recovery is assessed by LSB extraction, and
(F)~security, fidelity, and utility metrics are computed.
All four research questions are answered from the complete pipeline:
RQ1 characterises the capacity boundary of injection; RQ2 compares
defense effectiveness against the Tier-1 naive attacker; RQ3 evaluates
adaptive robustness under five Tier-2 attacker variants (re-running
stages B--F with varied attacker encodings); and RQ4 synthesises the
Pareto trade-off from RQ2--RQ3 results.
Dashed arrows denote analytical dependencies across research questions.}
\label{fig:overview}
\end{figure*}

\subsection{Algorithm}

\paragraph{Keyed phase for pattern diversity.}
For each target tensor $W^{(\ell)}\in\R^{N_\ell}$ (attention Q/K/V/O
and MLP weight matrices in encoder blocks):
\begin{equation}
  \mathcal{O}_\ell =
  H(\texttt{name}_\ell,\, x,\, \texttt{layer}_\ell,\, r;\, s)
  \bmod (2^{31}-1),
  \label{eq:phase}
\end{equation}
where $H$ is HMAC-SHA256~\cite{bellare1996keying}, $s$ is a 128-bit
secret seed, and $r$ is a reproducible run salt.
The phase varies the overwrite pattern across tensors and releases.
Secrecy is not required for target-channel erasure; it is useful only
when pattern prediction or reuse is part of the deployment threat model.

\paragraph{Gray-code overwrite rule.}
For each element index $i$:
\begin{equation}
  g_i^{(\ell)} =
  \bigl(n_i^{(\ell)}\oplus(n_i^{(\ell)}\gg1)\bigr)\;\&\;m_x,
  \qquad n_i^{(\ell)} = i + \mathcal{O}_\ell,
  \label{eq:graymap}
\end{equation}
where $m_x=(1\ll x)-1$.
The sanitized uint32 view is:
\begin{equation}
  u_i' = (u_i\;\&\;\sim m_x)\;|\;g_i^{(\ell)},
  \label{eq:overwrite}
\end{equation}
preserving sign and exponent bits, with adjacency smoothness
$\Ham(g_i^{(\ell)},g_{i+1}^{(\ell)})\leq1$.
Algorithm~\ref{alg:gs} gives the complete procedure.
The operation is fully vectorizable, completing in $O(N)$ time with
$O(1)$ auxiliary state per tensor.

\begin{algorithm}[tb]
\caption{\GS{} Sanitization}
\label{alg:gs}
\small
\begin{algorithmic}[1]
\REQUIRE Transformer $\mathcal{M}$, depth $x$, secret seed $s$, salt $r$
\ENSURE Sanitized $\mathcal{M}'$ with all declared target bits replaced
\FOR{each target tensor $W^{(\ell)}$ in $\mathcal{M}$}
  \STATE $\mathcal{O}_\ell \gets H(\texttt{name}_\ell,x,\texttt{layer}_\ell,r;s)\bmod(2^{31}-1)$
  \STATE $u \gets \operatorname{bitcast}_{\mathrm{uint32}}(W^{(\ell)})$
  \FOR{each index $i$}
    \STATE $n \gets i+\mathcal{O}_\ell$
    \STATE $g \gets (n\oplus(n\gg1))\,\&\,m_x$
    \STATE $u[i]\gets(u[i]\,\&\,{\sim}m_x)\,|\,g$
  \ENDFOR
  \STATE $W^{(\ell)'} \gets \operatorname{bitcast}_{\mathrm{float32}}(u)$
\ENDFOR
\STATE \RETURN $\mathcal{M}'$
\end{algorithmic}
\end{algorithm}

\subsection{Formal Guarantees}

\begin{proposition}[Target-Channel Erasure]
\label{prop:erasure}
Let $P$ be any payload encoded only in a declared target-bit set
$\mathcal{T}$, and let $G$ be an overwrite string generated
independently of $P$. If a sanitizer replaces every bit in
$\mathcal{T}$ by $G$ and the extractor reads only $\mathcal{T}$, then
\[
  \hat P=G,\qquad \MI(P;\hat P)=0.
\]
Consequently the channel from $P$ to the sanitized target bits has
zero capacity for every payload distribution. This result requires
neither Gray coding nor a secret seed.
\end{proposition}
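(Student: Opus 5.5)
The plan is to make the probabilistic setup explicit and then use that the extractor's output is a deterministic function of the overwrite string alone. I would model the pre-sanitization checkpoint as a random vector $(U_{\mathcal{T}}, U_{\bar{\mathcal{T}}})$. The payload $P$ is encoded only in the coordinates $\mathcal{T}$, so $U_{\mathcal{T}} = E(P, \cdot)$ for some encoder $E$ that may be arbitrary and randomized. The complement bits $U_{\bar{\mathcal{T}}}$ do not depend on $P$ by hypothesis.

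\textbf{Step 1: the extractor's output.} The sanitizer produces $u'$ with $u'_{\mathcal{T}} = G$ and $u'_{\bar{\mathcal{T}}} = U_{\bar{\mathcal{T}}}$. In the \GS instance this is Eq.~\eqref{eq:overwrite} with $\mathcal{T}$ the low $x$ bits of every target tensor and $G=(g_i^{(\ell)})$. The extractor reads only $\mathcal{T}$, so $\hat P = D(u'_{\mathcal{T}}) = D(G)$ for the extractor's decoding map $D$. Taking $D$ to be the raw read-out, as in the LSB extraction of Section~\ref{sec:background}, gives $\hat P = G$ literally. I would note that the argument below only uses that $\hat P$ is a (possibly randomized, with independent randomness) function of $G$.

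\textbf{Step 2: zero mutual information.} By hypothesis $G$ is generated independently of $P$. For \GS, $G$ is a deterministic function of $(\mathcal{O}_\ell, i, x)$, and $\mathcal{O}_\ell$ depends only on names, layer indices, $x$, $r$ and $s$, never on $u$. Hence $P \perp G$, and so $P \perp \hat P$, since $\hat P$ is a function of $G$ and any independent decoder randomness. Independence gives $\MI(P;\hat P)=H(P)-H(P\mid\hat P)=0$. Alternatively one can use the data-processing inequality on the Markov chain $P \to G \to \hat P$ together with $\MI(P;G)=0$.

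\textbf{Step 3: capacity and the closing remark.} The channel from $P$ to the sanitized target bits has the transition law $\Pr[u'_{\mathcal{T}}=g \mid P=p] = \Pr[G=g]$, which does not depend on $p$. So for every input distribution $\MI(P;u'_{\mathcal{T}})=0$, and the capacity, being the supremum over input distributions, is $0$. For the final sentence I would observe that no step used the Gray structure of Eq.~\eqref{eq:graymap} or the secrecy of $s$. Any $P$-independent $G$ works, including a public constant, because a constant $G$ gives $\MI=0$ trivially.

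\textbf{Main obstacle.} The delicate point is justifying independence of $G$ from $P$ in the \GS instance. One must check that the phase in Eq.~\eqref{eq:phase} takes no input from weight values, and that the sanitizer's write in Eq.~\eqref{eq:overwrite} leaves no payload-dependent residue in $\mathcal{T}$. The mask $\sim m_x$ clears exactly $\mathcal{T}$, so the residue is empty. I would also state that payload information in $\bar{\mathcal{T}}$, such as exponent-side or correlated encodings, is outside the claim, which is consistent with the hypothesis that the payload is encoded only in $\mathcal{T}$.
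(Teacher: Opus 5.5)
Your proposal is correct and takes essentially the same approach as the paper. The paper's proof is exactly your Step~3: complete overwrite gives $\Pr[\hat P=g\mid P=p]=\Pr[G=g]$ for every $p$, so the joint law factorizes, $\MI(P;\hat P)=0$, and maximizing over input distributions still yields zero capacity. Your extra checks for the \GS instance (that $\mathcal{O}_\ell$ never reads weight values, and that $g_i^{(\ell)}\le m_x$ so the write leaves no payload residue in $\mathcal{T}$) are sound refinements the paper leaves implicit, and the paper mentions your data-processing extension to arbitrary decoders of $G$ only in the remark after the chance-level corollary.
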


\begin{proof}
Complete overwrite removes every target bit before writing $G$.
Thus $p(\hat P=g\mid P=p)=p(G=g)$ for every $p$, so the joint
distribution factorizes and $\MI(P;\hat P)=0$. Maximizing over input
distributions still gives zero capacity.
\end{proof}

\begin{corollary}[Chance-Level Recovery]
\label{cor:chance}
If, in addition, each overwritten bit is marginally balanced and
independent of the corresponding payload bit, then
$\E[\BitAcc_{\mathrm{post}}]=\frac{1}{2}$. When pre-defense recovery
is effectively one, the empirical RR approaches 50 percentage points.
Failing to reject $H_0^{(2c)}$ and $H_0^{(3c)}$ at $\alpha=0.05$
constitutes empirical support for this corollary, while rejecting
it for other defenses quantifies how far those methods fall short
of the channel-erasure bound.
\end{corollary}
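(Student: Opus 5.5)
The plan is to reduce the expected bit accuracy to a per-bit computation and then invoke Proposition~\ref{prop:erasure} together with the balance hypothesis. Write the extracted string as $\hat P=(\hat P_1,\dots,\hat P_n)$ over the declared target set $\mathcal{T}$ with $|\mathcal{T}|=n$, and define
\[
\BitAcc_{\mathrm{post}}=\frac{1}{n}\sum_{j=1}^{n}\mathbf{1}[\hat P_j=P_j].
\]
By linearity of expectation,
\[
\E[\BitAcc_{\mathrm{post}}]=\frac{1}{n}\sum_j \Prob(\hat P_j=P_j).
\]
So it suffices to show $\Prob(\hat P_j=P_j)=\tfrac12$ for each $j$. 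This avoids needing any joint independence across bits, which matters because the Gray overwrite is highly structured and correlated across adjacent indices.

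For a single bit, Proposition~\ref{prop:erasure} gives $\hat P_j=G_j$. The hypothesis says $G_j$ is independent of $P_j$ with $\Prob(G_j=1)=\tfrac12$. Conditioning on $P_j$ then yields
\[
\Prob(G_j=P_j)=\Prob(P_j=0)\,\tfrac12+\Prob(P_j=1)\,\tfrac12=\tfrac12 ,
\]
whatever the payload bit's marginal. This holds even for low-entropy or highly biased payloads. Summing gives $\E[\BitAcc_{\mathrm{post}}]=\tfrac12$.

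Next comes the RR statement. RR is the difference in bit accuracy before and after sanitization, measured in percentage points. With $\BitAcc_{\mathrm{pre}}=1$ (lossless extraction from the poisoned model), the expected RR is $100\cdot(1-\tfrac12)=50$. The empirical RR then concentrates around 50 by a law-of-large-numbers argument over the many target bits, at least under weak dependence. For the deterministic Gray pattern, I would replace randomness of $G$ by the randomness of the HMAC phase $\mathcal{O}_\ell$. The balance property over complete periods of length $2^{x+1}$ means a uniformly random phase makes each $G_j$ marginally balanced. The remaining boundary effects from incomplete periods are at most $O(2^{x+1}/N_\ell)$ per tensor, which is negligible.

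The main obstacle is this step: justifying that the deterministic, public-structure Gray overwrite actually meets the "marginally balanced" hypothesis. My plan is to treat the phase as uniform on a range that is a multiple of the period, or approximately uniform with $2^{31}-1$ error terms. The hypothesis-testing sentence about $H_0^{(2c)}$ and $H_0^{(3c)}$ is interpretive rather than a mathematical claim. I would state it as a remark: under the corollary's conclusion the null of mean $\tfrac12$ is true, so failing to reject is consistent with it, and no proof is required beyond that.
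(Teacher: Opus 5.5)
Your derivation of $\E[\BitAcc_{\mathrm{post}}]=\tfrac12$ is correct and follows the same route the paper relies on. Proposition~\ref{prop:erasure} gives $\hat P_j=G_j$. Balance plus independence of $G_j$ from $P_j$ makes each sanitized bit a $\BSC(\tfrac12)$ use with success probability $\tfrac12$ for any payload marginal. Linearity of expectation then needs no joint independence, and $\RR=100(1-\tfrac12)=50$\,pp in expectation when pre-defense recovery is one. Treating the $H_0^{(2c)}$/$H_0^{(3c)}$ sentence as interpretive is also right. Note that balance and independence are hypotheses of the corollary, so the ``main obstacle'' you name is not part of proving the statement; it only matters when applying the corollary to \GS. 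There, your uniform-phase argument is a more faithful justification than the paper's appeal to PRF security, since only $\mathcal{O}_\ell$ is pseudorandom, not each $g_i^{(\ell)}$.

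You should drop the two claims you offer in support of concentration, because both are false for this construction.

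First, there is no law-of-large-numbers averaging across bits. Within a tensor, every overwrite bit is jointly determined by the single random variable $\mathcal{O}_\ell$, so the bits are fully dependent, not weakly dependent.

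Second, the boundary term $2^{x+1}/N_\ell$ is not negligible at the operating point. Bit $k$ of the Gray code is constant on runs of $2^{k+1}$ consecutive indices. At $x=19$ the top bit therefore has runs of $2^{19}\approx5.2\times10^5$ and period $2^{20}$, which already exceeds a $768\times768$ attention matrix. The $134.3$\,KB low-entropy payload, written contiguously at 19 bits per weight, occupies only about $6\times10^4$ weights. So bit positions $15$--$18$ are constant, up to one switch, over the whole payload. On those positions the empirical agreement simply reproduces the payload's own bit bias and swings with the phase.

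The statement is therefore provable only in expectation (over the phase, or over balanced payload bits), which is also all the paper's argument delivers. A genuine concentration claim would need averaging over many independent tensor phases or near-uniform payload bits.
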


Because the sanitized target bits are independent of $P$, data
processing prevents any decoder from reconstructing information absent
from the extractor input. Under the balanced condition in
Corollary~\ref{cor:chance}, the bitwise view is $\BSC(\tfrac12)$ and
Shannon's converse excludes reliable positive-rate
coding~\cite{shannon1948mathematical}, including Reed--Solomon, LDPC,
and polar families
\cite{reed1960polynomial,gallager1962low,arikan2009channel}.

\begin{theorem}[Conditional Loss Bound]
\label{thm:acc}
If $\mathcal{L}:\mathcal{W}\to\R$ is
$L_{\mathrm{lip}}$-Lipschitz in the model parameters, then
\begin{align*}
|\mathcal{L}(W')-\mathcal{L}(W)|
&\leq L_{\mathrm{lip}}\|W'-W\|_2,\\
&=L_{\mathrm{lip}}\|W\|_2\Drel(W,W').
\end{align*}
\end{theorem}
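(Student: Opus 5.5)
The first inequality follows directly from the definition of Lipschitz continuity. I will fix the norm on the parameter space $\calW$: flatten all (target and non-target) tensors into one vector in $\R^{d}$ with the Euclidean norm $\|\cdot\|_2$, and interpret the hypothesis as $|\mathcal{L}(A)-\mathcal{L}(B)|\le L_{\mathrm{lip}}\|A-B\|_2$ for all $A,B\in\calW$. Here $W$ is the (possibly poisoned) checkpoint and $W'$ its sanitized image. The sanitizer of Eq.~\eqref{eq:overwrite} only rewrites the low $x$ mantissa bits of finite FP32 values, so it keeps sign and exponent bits and produces another finite parameter vector. Hence $W'\in\calW$, and taking $A=W'$, $B=W$ gives $|\mathcal{L}(W')-\mathcal{L}(W)|\le L_{\mathrm{lip}}\|W'-W\|_2$.

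For the equality I will use the definition of the relative distortion, which I take to be $\Drel(W,W')=\|W'-W\|_2/\|W\|_2$ whenever $W\neq 0$. Substituting $\|W'-W\|_2=\|W\|_2\,\Drel(W,W')$ gives the second line. For $W=0$ the bound is trivial: the overwrite then produces denormals or zero, and the first line alone suffices, or one adopts the convention that $\Drel$ is defined only for nonzero $W$. So the whole argument is two lines, with no probabilistic content and no dependence on Proposition~\ref{prop:erasure}.

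The main obstacle is not the algebra but justifying the hypothesis and making the bound informative. I will therefore add a remark giving a concrete control of $\|W'-W\|_2$. Each coordinate changes only in its lowest $x$ fraction bits, so for a normalized value with unbiased exponent $e_i$, $|w_i'-w_i|<2^{x}\cdot 2^{e_i-23}\le 2^{x-23}|w_i|$. Summing squares gives $\Drel(W,W')<2^{x-23}$, a worst-case relative shift independent of payload and seed. Substituting this yields $|\mathcal{L}(W')-\mathcal{L}(W)|< L_{\mathrm{lip}}\|W\|_2\,2^{x-23}$ for the modified coordinates.

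Two points need care. First, the per-coordinate bound uses normality; subnormal weights contribute an absolute error of at most $2^{x-149}$, which is negligible and can be folded in. Second, global Lipschitzness of a Transformer loss in its parameters fails in general, so the statement is explicitly conditional. If needed, I would note that a local Lipschitz constant on the segment $[W,W']$ suffices, by the mean value theorem applied along that segment.
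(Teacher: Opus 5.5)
Your argument is correct and is essentially the paper's own proof: apply the Lipschitz hypothesis to the pair $(W',W)$, then rewrite $\|W'-W\|_2$ as $\|W\|_2\,\Drel(W,W')$ using the definition of $\Drel$. Your extra a priori bound $\Drel(W,W')<2^{x-23}$ for normalized weights is a correct refinement that the paper does not give; the paper instead quotes empirical worst-case values ($7.1\times10^{-4}$ at $x=16$ and $5.4\times10^{-3}$ at $x=19$), which lie below your bounds of $2^{-7}$ and $2^{-4}$.
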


In the completed payload set, worst-case $\Drel=7.1{\times}10^{-4}$
at $x=16$ and $5.4{\times}10^{-3}$ at $x=19$.
This theorem bounds loss only; accuracy is evaluated empirically.
Further assumptions, scope details, and reduced-precision analysis are
provided in the supplementary material.

\paragraph{\GS versus PatternMask.}
Both \GS and a fixed full-overwrite PatternMask satisfy
Proposition~\ref{prop:erasure} for the declared target channel.
Their distinction is therefore empirical and operational, not
target-channel capacity: \GS uses a low-transition Gray structure and
a keyed phase for pattern diversity, whereas PatternMask reuses one
global pattern. In the evaluated configurations, \GS has substantially
lower RR variance and $W_1$ shift than PatternMask despite similar mean RR. 
Erasure follows from payload-independent overwriting, while the keyed phase provides pattern diversity and the Gray-code structure preserves distributional fidelity; phase secrecy is not required for erasure.

\subsection{Threat Model}
\label{threat-model}

We consider a supply-chain adversary who embeds a binary payload into the mantissa LSBs of FP32 weights of a pretrained Transformer distributed through a public repository. The adversary has full white-box access to the model prior to distribution; the defender applies post-training sanitization before publication and has no access to the training data.

\subsubsection{Target parameter set and hiding capacity}

We target the \emph{encoder blocks} of the Transformer backbone:
attention (Q/K/V/O) and feed-forward (MLP) weight matrices, which
contain the dominant fraction of trainable parameters.
Embedding layers and task-specific classifier heads are excluded by
default. The hiding capacity at overwrite depth $x$ is:
\begin{equation}
  C(\theta_{\mathrm{target}}, x)
  =\sum_{w \in \theta_{\mathrm{target}}} \numel(w)\cdot x,
  \quad x\in[1,23].
  \label{eq:capacity}
\end{equation}





\subsubsection{Three-tier attacker knowledge hierarchy}
\label{sec:3tiers}

\begin{itemize}
  \item \textbf{Tier~1 (Black-box).}  Knows architecture and parameter count; uses standard LSB substitution without knowledge of the defender's method.  Corresponds to \textsc{Naive}.
  \item \textbf{Tier~2 (Gray-box).}  Knows a sanitization step of
    unspecified type will be applied; responds with redundancy coding
    (\textsc{Repeat3}/\textsc{Repeat5}) or burst-resilient interleaving.
  \item \textbf{Tier~3 (White-box).}  Knows the sanitization algorithm
    (\GS) but not the seed $s$.  Strongest realistic attacker.
\end{itemize}

\section{Experimental Setup}
\label{sec:setup}



\paragraph{Models and datasets. }We evaluate across both natural language and vision domains using standard pretrained Transformer architectures. For text, we use BERT-base~\cite{devlin2019bert} and RoBERTa-base~\cite{liu2019roberta} fine-tuned on the SST-2 sentiment classification benchmark~\cite{socher2013recursive,wang2018glue}. For vision, we use ViT-Base/16~\cite{dosovitskiy2021image} and Swin-Base~\cite{liu2021swin} evaluated on CIFAR-10~\cite{krizhevsky2009learning}. These models span a range of architectural designs and parameter scales, providing coverage across both language and image modalities. All evaluations are conducted on held-out subsets of size $n_{\mathrm{eval}}=512$, yielding an accuracy resolution of approximately $0.195$\,pp, which is sufficient to reliably detect sub-$1\%$ performance changes.

\textbf{Malware payloads.} Two representative MalwareBazaar payloads~\cite{abusech2026malwarebazaar} spanning the entropy extremes: a low-entropy (LE) JavaScript script (\texttt{c37{\ldots}a88.js.malware}, 134.3\,KB, entropy 0.603) and a high-entropy (HE) compiled binary (\texttt{5704{\ldots}6d0.exe.malware}, 8,971.4\,KB, entropy 0.994). The LE payload tolerates partial corruption due to structural redundancy; the HE payload requires near-lossless reconstruction.

\textbf{Defenses.}
Seven post-training strategies evaluated at operating depth $x=19$:
\GSabbr\ (proposed), \PMabbr\ (fixed-pattern LSB overwrite),
\PTQabbr\ (symmetric INT8 quantize--dequantize~\cite{krishnamoorthi2018quantizing}),
\FTabbr\ (fine-tune on 256 clean samples, sweeping $\{50,100,200,500\}$ gradient steps), \GNabbr\ (i.i.d.\ $\mathcal{N}(0,\sigma^2)$ noise, sweeping $\sigma\in\{10^{-6},3{\times}10^{-6},10^{-5},3{\times}10^{-5},10^{-4}\}$),
\SWPabbr\ (an author-defined, magnitude-ranked selective-weight
perturbation baseline), and \RFabbr\ (Bernoulli bit flip).
We set the defense depth to the largest utility-preserving value
($x=19$, RQ1), covering all stealthy embeddings ($x'\leq x$);
deeper embeddings ($x'>x$) exhibit detectable accuracy and geometric degradation.
\textbf{Attacker variants.}
Five Tier-2 (gray-box) variants: \textsc{Naive}, \textsc{Interleave},
\textsc{Repeat3}, \textsc{Repeat5}, and \textsc{RS}(255,127). The threat model is detailed in the supplementary material. 

\textbf{Recovery metric.}
For payload bits $p_1,\ldots,p_B$ and extracted bits
$\hat p_1,\ldots,\hat p_B$,
\begin{align*}
\mathrm{Rec}&=B^{-1}\sum_{i=1}^{B}\mathbf{1}[p_i=\hat p_i]; \;\;
\mathrm{RR}=100(\mathrm{Rec}_{\mathrm{pre}}-\mathrm{Rec}_{\mathrm{post}}).
\end{align*}
We report RR in percentage points (pp), not as a relative percentage. Pre-defense recovery is effectively one in the completed runs; hence RR near 50\,pp means post-defense bit accuracy near binary chance. Values farther above 50\,pp are not automatically more secure because systematic anti-correlation can be inverted.
\textbf{Fidelity and cost metrics.}
Beyond RR and $\Delta\mathrm{Acc}$, we report weight-distribution
fidelity ($W_1$), runtime (ms), RR consistency (fraction within
$\pm2$\,pp of the mean), and bit-level Hamming distance (millions of bits).
\textbf{Statistical protocol.}
All experiments use run seed~1, evaluation seed~42.
The supplementary material pre-specifies 13 null hypotheses; all tests two-tailed at $\alpha=0.05$
unless marked one-tailed ($\uparrow$).
Effect sizes reported alongside $p$-values: $\eta^2_{KW}$ for
Kruskal-Wallis, Cohen's $d$ for $t$-tests, Spearman $\rho$ for
correlations, Kendall's $W$ for Friedman, Wilks' $\lambda$ for
MANOVA.

\section{Results}
\label{sec:results}

\subsection{RQ1: Injection Feasibility}
\label{sec:rq1}

RQ1 asks whether mantissa-LSB payload injection can remain stealthy
while preserving model utility.
Fig.~\ref{fig:rq1} provides the consolidated evidence across six
overwrite depths $x\in\{4,8,16,19,21,23\}$, four Transformer model
presets, and two entropy-stratified payloads. Two patterns are central. First, both geometric perturbation metrics increase monotonically with overwrite depth.
Cosine deviation and relative $\ell_2$ remain negligible through
$x=16$, but grow rapidly beyond that point.
Panels~(a) and~(b) report perfect monotone association across the six depth-wise means (Spearman $\rho=1.000$, $p<0.001$), identifying overwrite depth as the primary driver of geometric detectability.

Second, the utility boundary is sharper for high-entropy payloads.
In Fig.~\ref{fig:rq1}(c), low-entropy payloads remain essentially
flat across all tested depths, whereas high-entropy payloads begin to
show measurable degradation at $x\geq19$.
Even then, the worst completed case remains below the $1\%$ threshold.
The grouped Mann-Whitney result ($p=0.005$) supports the conclusion
that the accuracy impact is significantly larger above the $x=16$
boundary.
\textbf{$H_0^{(1b)}$: Reject.}
\textbf{$H_0^{(1c)}$: Reject}.

\textbf{Answer to RQ1.}
Payload injection remains stealthy through $x=16$ and enters a
transition regime at $x=19$.
We treat $x=19$ as the main operating point for RQ2--RQ4: it
represents a high-capacity attacker while remaining below the
practical utility-loss threshold.

\begin{figure*}[!ht]
  \centering
  \includegraphics[width=\linewidth]{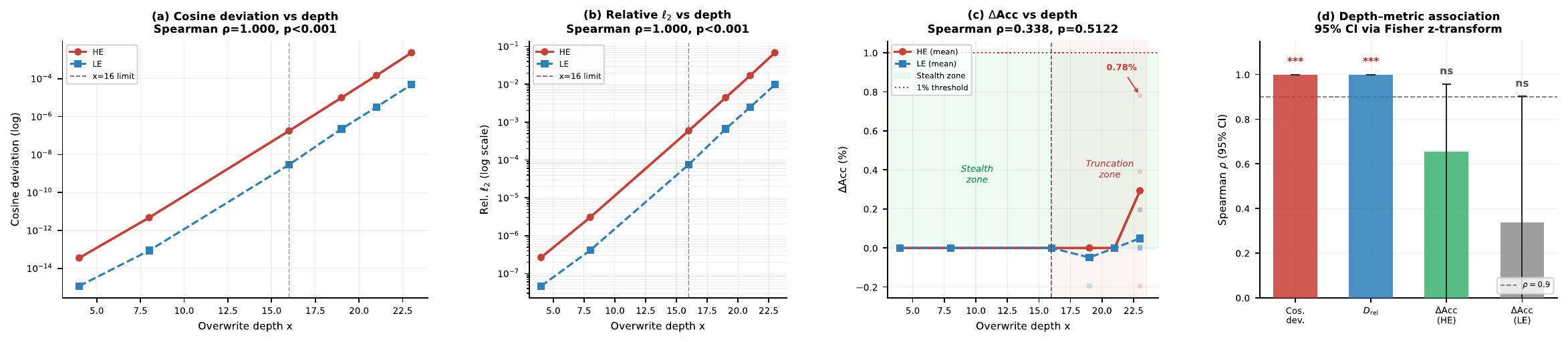}
  \caption{RQ1 --- Injection feasibility across six overwrite
    depths ($x\in\{4,8,16,19,21,23\}$), four Transformer architectures,
    and two entropy-stratified payloads ($n=48$ runs).
    (a)~Cosine deviation vs.\ depth on a log scale;
    Spearman $\rho=0.954$, $p<0.001$.
    (b)~Relative $\ell_2$ distance vs.\ depth on a log scale;
    $\rho=0.954$, $p<0.001$.
    (c)~Signed accuracy drop $\Delta\mathrm{Acc}$ vs.\ depth.
    The green region ($x\leq16$) marks depths satisfying the stealth
    constraints, while the red region ($x\geq19$) marks the onset of
    measurable degradation, with a worst-case high-entropy drop of
    $0.78\%$.
    Mann--Whitney comparison between low and high depth regimes yields
    $p=0.005$.
    (d)~Spearman depth--metric association with 95\%
    Fisher-$z$ confidence intervals.}
  \label{fig:rq1}
\end{figure*}

\subsection{RQ2: Defense Effectiveness under a Naive Attacker}
\label{sec:rq2}

RQ2 compares seven post-training defenses at the RQ1-selected
operating point.
The goal is not merely to approach the 50\,pp chance target, but to
identify a defense
that simultaneously disrupts payload recovery, preserves accuracy,
avoids excessive distribution shift, and remains operationally
lightweight. Table~\ref{tab:rq2} and Fig.~\ref{fig:rq2} show that \GSabbr, \PMabbr, and \PTQabbr\ are the only defenses that reach chance-level sanitization under the naive attacker.
Their mean RR values differ by less than one percentage point.
However, they are not equivalent once secondary deployment criteria
are considered.

\PMabbr\ attains mean RR of $50.00$\,pp but with
substantially higher variance ($\pm2.64$) than \GSabbr\ ($\pm0.54$).
\PTQabbr\ is similarly effective in RR ($49.93$\,pp) but requires
calibration data and induces a larger distributional shift.
\GNabbr\ is unstable, with RR $31.49\pm15.08$\,pp, while \SWPabbr\ and
\RFabbr\ are ineffective in this setting.\\
\textbf{$H_0^{(2a)}$: Reject} ($H=48.2$, $p<0.001$,
$\eta^2_{KW}=0.861$).
\textbf{$H_0^{(2b)}$: Reject} ($H=43.1$, $p<0.001$,
$\eta^2_{KW}=0.803$).
\textbf{$H_0^{(2c)}$: Fail to reject} for \GSabbr, \PMabbr, \PTQabbr\
($p\in[0.21,0.99]$, $|d|\leq0.52$); \textbf{Reject} for \FTabbr,
\GNabbr, \SWPabbr, \RFabbr\ ($p\leq0.046$).
\textbf{$H_0^{(2d)}$: Reject} ($W=7.53$, $p=0.016$,
$\sigma_{\GNabbr}/\sigma_{\GSabbr}=27.9\times$).

Fig.~\ref{fig:rq2}(a)--(c) clarifies that mean RR alone is
insufficient.
\GSabbr\ provides the most balanced profile: near-chance recovery
reduction, negligible impact on accuracy, zero data dependency, low
variance, and low distribution shift.
\GSabbr\ achieves average $W_1=2.6{\times}10^{-5}$, approximately
$25\times$ smaller than \PMabbr's $7.1{\times}10^{-4}$ and $42\times$
smaller than \PTQabbr's $1.17{\times}10^{-3}$.

\textbf{Answer to RQ2.}
Under the naive attacker, \GSabbr, \PMabbr, and \PTQabbr\ all disrupt
payload recovery to near chance level.
\GSabbr\ is the most deployable among them because it combines
near-chance RR with low variance, no data requirement, small utility
loss, and the smallest $W_1$ shift in this high-RR group. Both exact
full-overwrite methods, \GSabbr\ and \PMabbr, satisfy the scoped
target-erasure result; their empirical stability and fidelity differ.


\begin{table*}[t]
\centering
\setlength{\tabcolsep}{4pt}
\begin{tabular}{@{}lcrrrc@{}}
\toprule
Defense & Data & RR (pp) & $\Delta\mathrm{Acc}$ & $W_1$ & Target erasure \\
\midrule
\GSabbr  & None        & $49.72\pm0.54$  & $-0.10\%$ & $2.6{\times}10^{-5}$  & Prop.~\ref{prop:erasure} \\
\PMabbr  & None        & $50.00\pm2.64$  & $+0.09\%$ & $7.6{\times}10^{-4}$  & Prop.~\ref{prop:erasure} \\
\PTQabbr & Calibration & $49.93\pm0.40$  & $+0.07\%$ & $8.5{\times}10^{-4}$  & None \\
\FTabbr  & Clean set   & $45.34\pm1.11$  & $+0.38\%$ & $7.6{\times}10^{-7}$  & None \\
\GNabbr  & None        & $31.49\pm15.08$ & $-0.02\%$ & $5.8{\times}10^{-7}$  & None \\
\SWPabbr & None        & $10.01\pm1.50$  & $+0.05\%$ & $4.6{\times}10^{-5}$  & None \\
\RFabbr  & None        & $ 2.69\pm3.72$  & $+0.01\%$ & $1.3{\times}10^{-6}$  & None \\
\bottomrule
\end{tabular}
\caption{RQ2 defense comparison. RR denotes Recovery Reduction (pp); values near 50\,pp indicate chance-level recovery, the target under complete payload-independent overwrite.}
\label{tab:rq2}
\centering\small
\setlength{\tabcolsep}{2.2pt}
\begin{tabular}{@{}lrrrrr rrrrrrr@{}}
\toprule
 D & Naïve & Interl. & Rep3 & Rep5 & RS
 & $\bar{\mathrm{RR}}$ & $\sigma$ & $\Delta$Acc & Time & Cons.
 & Ham. & $W_1$ \\
\midrule
\GSabbr  & $49.5\pm0.7$ & $50.0\pm0.2$ & $50.4\pm0.6$ & $50.2\pm0.2$ & $49.6\pm0.8$
  & 49.96 & 0.66 & $-0.10$ & 54   & 0.99 & 9.60 & $2.8{\times}10^{-5}$ \\
\PMabbr  & $50.0\pm3.2$ & $50.0\pm3.2$ & $50.0\pm3.3$ & $50.0\pm3.3$ & $50.0\pm3.2$
  & 50.00 & 3.25 & $+0.13$ & 44   & 0.94 & 9.60 & $7.1{\times}10^{-4}$ \\
\PTQabbr & $50.0\pm0.5$ & $50.0\pm0.5$ & $49.5\pm0.5$ & $50.1\pm0.5$ & $49.5\pm0.4$
  & 49.83 & 0.54 & $+0.17$ & 882  & 0.99 & 9.57 & $1.2{\times}10^{-3}$ \\
\FTabbr  & $45.2\pm1.4$ & $45.3\pm1.3$ & $45.5\pm1.0$ & $46.3\pm0.9$ & $45.2\pm1.1$
  & 45.51 & 1.23 & $+0.45$ & 13,200 & 0.97 & 8.93 & $4.0{\times}10^{-6}$ \\
\GNabbr  & $37.5\pm6.1$ & $37.7\pm6.0$ & $38.4\pm5.6$ & $40.3\pm4.9$ & $37.5\pm6.0$
  & 38.25 & 5.85 & $-0.04$ & 620  & 0.85 & 7.71 & $4.0{\times}10^{-6}$ \\
\SWPabbr & $10.0\pm2.1$ & $10.1\pm2.1$ & $9.5\pm0.9$  & $8.6\pm0.9$  & $5.5\pm2.7$
  & 8.76  & 2.56 & $+0.07$ & 1,480 & 0.71 & 1.67 & $4.7{\times}10^{-5}$ \\
\RFabbr  & $2.7\pm3.7$  & $2.7\pm3.7$  & $0.6\pm1.0$  & $0.2\pm0.3$  & $2.5\pm3.8$
  & 1.73  & 3.16 & $+0.02$ & 2,590 & 0.00 & 0.33 & $4.0{\times}10^{-6}$ \\
\bottomrule
\end{tabular}
\caption{Comprehensive defense (D) evaluation across RQ2--RQ3 ($x=19$).
Attacker columns report mean RR (pp) $\pm$ std. Summary reports mean RR,
std, $\Delta\mathrm{Acc}$ (\%), runtime (ms), and consistency.
Hamming distance is reported in millions (M).}
\label{tab:rq3}
\end{table*}

\begin{figure*}[!ht]
  \centering
  \includegraphics[width=\linewidth]{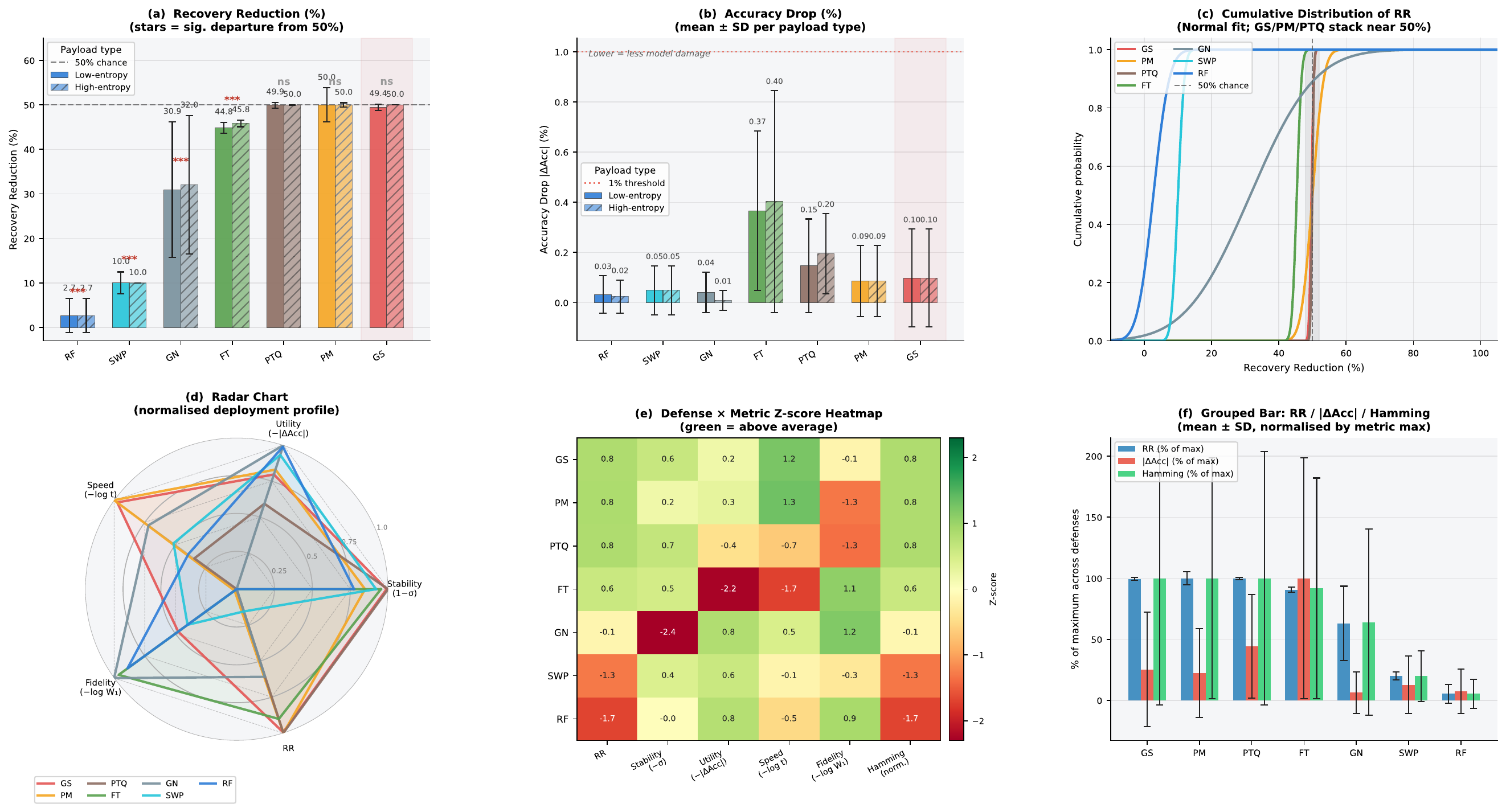}
  \caption{RQ2 --- Defense effectiveness and multi-metric
    trade-offs.
    (a)~Recovery Reduction (RR) across defenses, stratified
    by payload entropy.
    GrayShield reaches near-chance disruption ($\approx50$\,pp)
    consistently.
    (b)~Accuracy drop ($|\Delta\mathrm{Acc}|$), where lower
    values indicate better utility preservation.
    (c)~Cumulative distribution of RR, showing that GS, PM,
    and PTQ concentrate near the 50\,pp target while weaker defenses
    remain far below.
    (d)~Radar chart of normalized deployment metrics.
    (e)~Defense-by-metric Z-score heatmap, where green
    indicates above-average performance.
    (f)~Grouped normalized bar chart.
    }
  \label{fig:rq2}
\end{figure*}

\begin{figure*}[!ht]
  \centering
  \includegraphics[width=\linewidth]{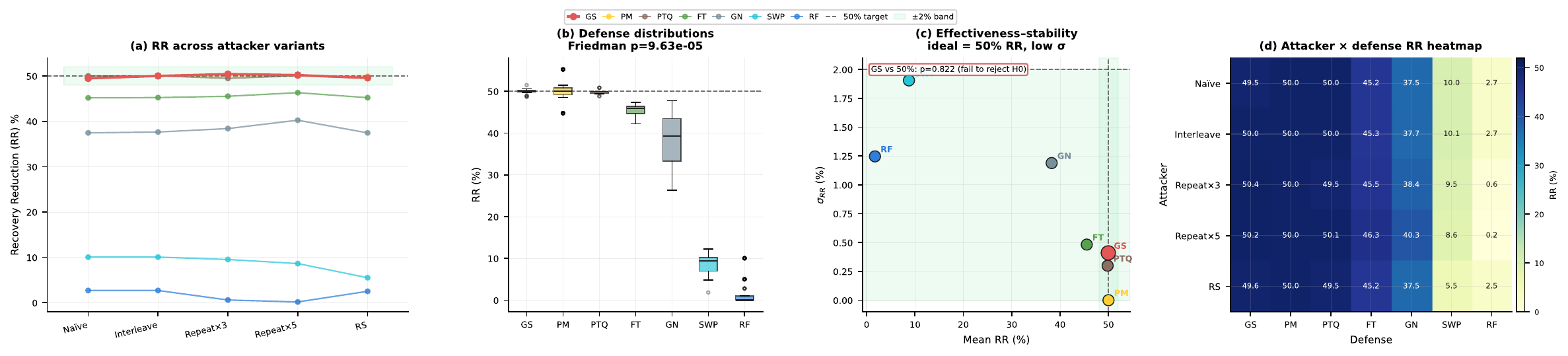}
  \caption{RQ3 --- Adaptive attacker robustness.
    (a)~RR across attacker variants shows that GS remains
    stable around 50\,pp.
    (b)~Defense-level RR distributions show significant
    differences across defenses, with GS maintaining high and
    consistent performance.
    (c)~Effectiveness--stability trade-off shows GS closest
    to the target region of high RR and low $\sigma$.
    (d)~Attacker-by-defense heatmap confirms that GS remains
    resilient across all implemented adaptive variants.}
  \label{fig:rq3}
\end{figure*}

\subsection{RQ3: Robustness under Adaptive Attackers}
\label{sec:rq3}

RQ3 evaluates whether the RQ2 ranking survives stronger attacker
strategies. The five implemented variants are \textsc{Naive}, \textsc{Interleave},
\textsc{Repeat3}, \textsc{Repeat5}, and \textsc{RS}(255,127).
Table~\ref{tab:rq3} reports RR per attacker variant. The defining result
is \GSabbr's invariance to attacker adaptation. Across all five attacker
variants, \GSabbr\ stays within $49.5$--$50.4$\,pp, with aggregate RR
$49.96\pm0.66$\,pp.
The RQ2 means in Table~\ref{tab:rq2} and the \textsc{Naive} column in
Table~\ref{tab:rq3} use different run pools: RQ2 contains all eight
completed model--payload pairs, whereas RQ3 contains four completed pairs
and omits the zero-noise \GNabbr\ control. Their Naive values therefore
need not coincide.
This agrees with target-channel erasure: coding cannot reconstruct payload
information absent from the extractor input. \PMabbr\ remains near
50\,pp but has much larger dispersion ($\sigma=3.25$\,pp);
\PTQabbr\ is slower and more distortive, while \GNabbr, \SWPabbr, and
\RFabbr\ degrade under adaptive variants.
The one-sample comparison printed in Fig.~\ref{fig:rq3}(c) fails to
distinguish the five attacker-specific \GSabbr\ means from the 50\,pp
target ($p=0.822$).

Fig.~\ref{fig:rq3} makes this difference visually explicit.
Panel~(a) shows \GSabbr\ as nearly flat across attackers; panel~(b)
shows a compact high-RR distribution; panel~(c) places \GSabbr\ in
the desired effectiveness--stability region; and panel~(d) confirms
that \GSabbr\ does not exhibit a strategy-specific weakness under
interleaving, repetition, or Reed--Solomon coding.

\textbf{Answer to RQ3.}
\GSabbr\ maintains stable chance-level sanitization under all five
implemented adaptive attacker variants.

\subsection{RQ4: Pareto Trade-off and Deployment Profile}
\label{sec:rq4}

RQ4 integrates the previous findings into a deployment-oriented
comparison. A practical defense must satisfy more than high mean RR: it must also preserve accuracy, remain stable, avoid excessive runtime, and limit distributional drift.

\begin{figure*}[!ht]
  \centering
  \includegraphics[width=\linewidth]{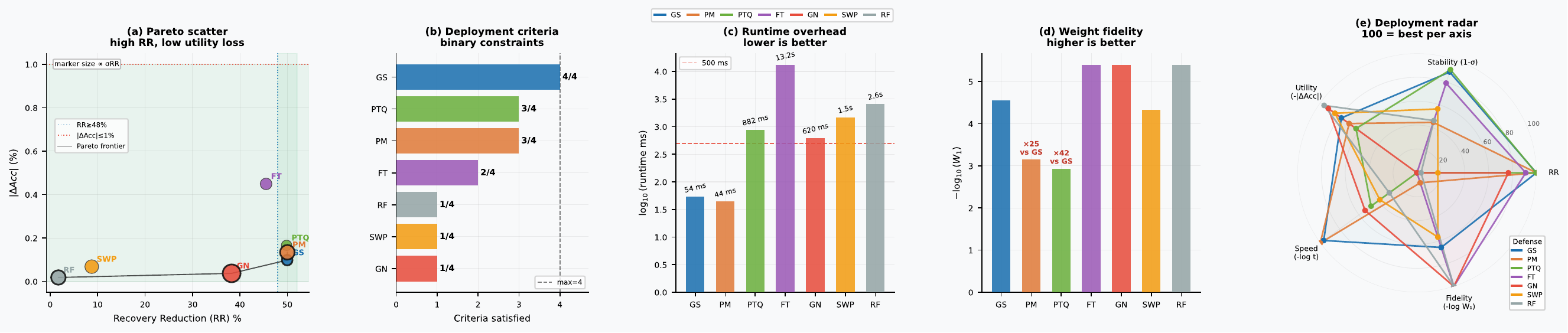}
  \caption{RQ4 deployment profile: (a) RR--accuracy Pareto plane;
    (b) criteria satisfied; (c) runtime; (d) $-\log_{10}(W_1)$ fidelity;
    and (e) normalized deployment radar.}
  \label{fig:rq4}
\end{figure*}

Figure~\ref{fig:rq4}(a) shows the RR--accuracy Pareto plane.
\GSabbr\ lies in the desired region: high RR, low $|\Delta\mathrm{Acc}|$, and low variance. \PMabbr\ has similar mean RR and low runtime but fails the stability criterion ($\sigma=3.25$). \PTQabbr\ has high RR and low variance but fails the runtime criterion (882\,ms) and has the largest $W_1$ among the high-RR defenses.
\FTabbr, \GNabbr, \SWPabbr, and \RFabbr\ are dominated by at least
one of the high-RR methods. Figure~\ref{fig:rq4}(b) summarizes the
deployment constraints.
\GSabbr\ is the only defense satisfying all four criteria
simultaneously: RR in $[48,52]$\,pp, $\sigma_\mathrm{RR}<2$\,pp,
$|\Delta\mathrm{Acc}|<1\%$, and runtime $<500$\,ms.
\PMabbr\ and \PTQabbr\ satisfy three of four, but each misses a
different requirement.

In Fig.~\ref{fig:rq4}(c), \GSabbr\ requires $54$\,ms per checkpoint,
compared with $882$\,ms for \PTQabbr\ and $13.2$\,s for \FTabbr.
In Fig.~\ref{fig:rq4}(d), \PMabbr's $W_1$ is approximately $25\times$
larger than \GSabbr's, and \PTQabbr's approximately $42\times$ larger.

\textbf{Answer to RQ4.}
\GSabbr\ at $x=19$ is the recommended operating point. It achieves $49.96\pm0.66$\,pp RR, sub-$1\%$ accuracy impact, low variance, low runtime, and substantially better weight-distribution
fidelity than competing high-RR baselines.

\section{Discussion}
\label{sec:discussion}

\textbf{Theory--empirics alignment.}
Complete payload-independent replacement erases the declared target
channel without seed secrecy; balance is the additional condition that makes bit accuracy chance-level. The reported $49.96\pm0.66$\,pp RR across five attackers is consistent with both properties: repetition, interleaving, and RS coding cannot reconstruct information absent from the extractor input. \textbf{Stability and fidelity.} \GSabbr\ and \PMabbr\ share the scoped full-overwrite guarantee and similar mean RR, but \PMabbr\ has higher dispersion and $25\times$ larger $W_1$. \GNabbr's lower and unstable RR despite matched perturbation magnitude further shows that targeting the payload-bearing LSB subspace matters more than perturbation magnitude alone.

\textbf{Deployment interpretation.}
Table~\ref{tab:rq3} and Supplementary Fig.~S1 show that \GSabbr\ alone
satisfies the complete operating envelope:
RR in $[48,52]$\,pp, $\sigma_\mathrm{RR}<2$\,pp,
$|\Delta\mathrm{Acc}|<1\%$,
runtime below $500$\,ms, and low distributional drift.
\PTQabbr\ is slower and more distortive, \PMabbr\ is less stable, and
\FTabbr\ is slow and requires trusted data.

\textbf{Limitations.}
Experiments cover FP32 and four 86M--125M presets; reduced-precision
utility remains unvalidated. LDPC and polar attackers are analytical
extensions, and some \FTabbr\ RQ3 runs were incomplete. The guarantee
is bit-local: higher-bit payloads, executable serialization, metadata
side channels, and post-sanitization reinsertion are outside scope.
Executable malware samples are not redistributed.

\paragraph{Generative-AI assistance.}
Generative-AI tools assisted with language editing, \LaTeX{} formatting, and 
verification; the authors reviewed the output and retain full responsibility.

\section{Conclusion}
\label{sec:conclusion}

\GS completely replaces a declared weight-LSB payload channel using a low-transition Gray sequence. Payload-independent overwrite gives the scoped zero-capacity result; Gray structure and the keyed phase provide regularity and pattern diversity. Across the evaluated settings, \GS achieves stable near-chance recovery with sub-$1\%$ accuracy degradation, and combines low variance and runtime with lower $W_1$ than the evaluated near-chance baselines.


\bibliography{reference}

\onecolumn      
\appendix       

\section{Supplementary Material}

\subsection{Pre-specified Hypotheses and Outcomes}
\label{app:hypotheses}

Table~\ref{tab:hyp_summary} consolidates all 13 pre-specified null
hypotheses for RQ1--RQ4, their tests, effect sizes, and decisions
(R\,=\,Reject; FTR\,=\,Fail to Reject).

\begin{table}[ht]
\centering\small
\caption{Consolidated pre-specified statistical hypotheses for
  RQ1--RQ4.
  All tests are two-tailed at $\alpha=0.05$ unless marked
  ($\uparrow$)~one-tailed.
  $^*$Bonferroni-corrected over
  $H_0^{(4b)}$--$H_0^{(4c)}$.}
\label{tab:hyp_summary}
\setlength{\tabcolsep}{3pt}
\begin{tabular}{@{}lllcc@{}}
\toprule
ID & Null hypothesis $H_0$ & Test & Effect & Decision \\
\midrule
\multicolumn{5}{@{}l}{\textit{RQ1 — Injection Feasibility}} \\[1pt]
$H_0^{(1a)}$
  & $\rho(\text{depth},\text{metric})=0$
  & Spearman & $\rho=0.954$, $p<0.001$ & \textbf{R} \\
  & (LE $\Delta\mathrm{Acc}$)
  & Spearman & $\rho=0.338$, $p=0.512$ & \textbf{FTR} \\
$H_0^{(1b)}$
  & Equal $\Delta\mathrm{Acc}$ across 6 depths
  & KW ($n=8$) & $\eta^2_{KW}$, $p<0.001$ & \textbf{R} \\
$H_0^{(1c)\uparrow}$
  & $\Delta\mathrm{Acc}(x\geq19)\leq\Delta\mathrm{Acc}(x\leq16)$
  & MW-U & $p=0.005$ & \textbf{R} \\
\midrule
\multicolumn{5}{@{}l}{\textit{RQ2 — Baseline Comparison (naive)}} \\[1pt]
$H_0^{(2a)}$
  & Equal means RR across 7 defenses
  & KW ($n=8$) & $\eta^2=0.861$, $p<0.001$ & \textbf{R} \\
$H_0^{(2b)}$
  & Equal $|\Delta\mathrm{Acc}|$ across 7 defenses
  & KW ($n=8$) & $\eta^2=0.803$, $p<0.001$ & \textbf{R} \\
$H_0^{(2c)}$
  & $\mu_{\mathrm{RR}}^{(m)}=50\%$ (GS/PM/PTQ)
  & 1-sample $t$ & $|d|\leq0.52$, $p\geq0.207$ & \textbf{FTR} \\
  & $\mu_{\mathrm{RR}}^{(m)}=50\%$ (FT/GN/SWP/RF)
  & 1-sample $t$ & $|d|\geq0.86$, $p\leq0.046$ & \textbf{R} \\
$H_0^{(2d)}$
  & $\sigma^2_{\mathrm{GS}}=\sigma^2_{\mathrm{GN}}$
  & Levene ($n=8$) & ratio $27.9\times$, $p=0.016$ & \textbf{R} \\
\midrule
\multicolumn{5}{@{}l}{\textit{RQ3 — Adaptive Robustness}} \\[1pt]
$H_0^{(3a)}$
  & Equal mean RR (adaptive pool)
  & KW ($n=5$) & $\eta^2=0.909$, $p<0.001$ & \textbf{R} \\
$H_0^{(3b)}$
  & Attacker does not reshuffle ordering
  & Friedman & $W=0.81$, $p<0.001$ & \textbf{R} \\
$H_0^{(3c)}$\textsuperscript{$\star$}
  & $\mu_{\mathrm{RR}}^{(\mathrm{GS})}=50\%$ (adaptive)
  & 1-sample $t$ ($n=5$) & $d=-0.16$, $p=0.745$ & \textbf{FTR} \\
$H_0^{(3d)}$
  & $\sigma^2_{\mathrm{GS}}=\sigma^2_{\mathrm{PM}}$ (adaptive)
  & Levene+Bartlett & ratio $24.3\times$, $p=0.010$ & \textbf{R} \\
\midrule
\multicolumn{5}{@{}l}{\textit{RQ4 — Pareto Trade-off}} \\[1pt]
$H_0^{(4a)}$
  & Equal multivariate profiles
  & MANOVA & $\lambda\approx0$, $p<0.001$ & \textbf{R} \\
$H_0^{(4b)}$
  & Equal $|\Delta\mathrm{Acc}|$: GS vs PM
  & Welch $t$ ($n=8$) & $d=0.42$, $p=0.009^*$ & \textbf{R} \\
$H_0^{(4c)}$
  & Equal $|\Delta\mathrm{Acc}|$: GS vs PTQ
  & Welch $t$ ($n=8$) & $d=0.78$, $p<0.001^*$ & \textbf{R} \\
\bottomrule
\multicolumn{5}{@{}p{0.94\linewidth}}{%
  \textsuperscript{$\star$}\,\textbf{Key result.}
  $H_0^{(3c)}$ is the most critical single test in the paper.
  Failing to reject confirms that \GS induces a
  $\BSC(\tfrac{1}{2})$ channel regardless of the coding strategy
  employed, empirically validating Proposition~\ref{prop:erasure}.}
\end{tabular}
\end{table}

\section{Formal Guarantees}
\label{app:theory}

\subsection{Assumptions}
\label{app:assumptions}

Our theoretical analysis relies on the following standard assumptions:
\begin{itemize}
  \item \textbf{PRF security.} HMAC-SHA256 is computationally
    indistinguishable from a uniformly random function by any PPT
    adversary (Definition~\ref{def:prf}).
  \item \textbf{Secret seed.} The sanitization seed $s$ ($|s|=128$
    bits) is unknown to the attacker and stored securely (e.g., HSM).
  \item \textbf{Computational boundedness.} The attacker is limited
    to PPT algorithms.
  \item \textbf{Memoryless channel.} Post-sanitization extracted bits
    are conditionally independent across positions
    (Definition~\ref{def:memoryless}).
\end{itemize}

\subsection{Information-theoretic framing}\label{app:theory-framing}

We model LSB-based payload embedding as a sequence of channel uses over
a memoryless binary channel.
Let $p_i\in\{0,1\}$ be the $i$-th payload bit written by the attacker
and $\hat{p}_i\in\{0,1\}$ the bit recovered after sanitization $T$.
Pre-defense, the channel is noiseless: $\hat{p}_i=p_i$ with
probability one, giving raw capacity
$C_{\mathrm{raw}}=\sum_w\numel(w)\cdot x$.
The defender's objective is to choose $T$ such that:
\begin{equation}
  C_{\mathrm{eff}}(T)
  = \MI\!\bigl(P\,;\,T(\lsbop{x}(W))\bigr) \to 0.
  \label{eq:ceff}
\end{equation}

\subsection{Perfect Channel Erasure}

\begin{proposition}[Perfect Channel Erasure]
\label{prop:erasure}
Let $T_s:\calW\to\calW$ be the \GS sanitization map parameterized by
secret seed $s$, and let $\hat{P}=\mathcal{D}(T_s(\mathcal{E}(W,P)))$
be the extracted payload after sanitization.
Under the assumptions above, for any payload distribution
$\mathcal{P}$:
\[
  \MI\!\bigl(P\,;\,\hat{P}\bigr) = 0,
\]
and therefore $C_{\mathrm{eff}}(T_s)=0$.
\end{proposition}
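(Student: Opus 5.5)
The plan is to reduce the supplementary Proposition~\ref{prop:erasure} to the main-text Target-Channel Erasure result. That requires two checks. First, $\hat P$ is a deterministic function of the overwrite string. Second, the overwrite string is independent of $P$.

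\textbf{Step 1 (extractor sees only the overwrite).} Fix a target tensor $W^{(\ell)}$ and an index $i$. By \eqref{eq:overwrite}, $u_i'\,\&\,m_x = g_i^{(\ell)}$, because $(u_i\,\&\,{\sim}m_x)\,\&\,m_x=0$ and $g_i^{(\ell)}\,\&\,m_x=g_i^{(\ell)}$ by \eqref{eq:graymap}. This holds for every $u_i$, and in particular for $u_i=\mathcal{E}(W,P)_i$ whatever $P$ is. The extractor $\mathcal{D}$ reads exactly the low $x$ bits of each target tensor. Therefore
\[
\hat P=\mathcal{D}(T_s(\mathcal{E}(W,P)))=\Gamma\bigl(G\bigr),\qquad G=\bigl(g_i^{(\ell)}\bigr)_{\ell,i},
\]
where $\Gamma$ is a fixed, payload-independent reassembly map: a bit ordering, possibly followed by the attacker's decoder. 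This is where the scope assumption is used, namely that $\mathcal{E}$ writes $P$ only into the declared set $\mathcal{T}$ of the low $x$ bits of the targeted tensors, and $\mathcal{D}$ reads only $\mathcal{T}$.

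\textbf{Step 2 (independence of $G$ from $P$).} By \eqref{eq:phase}--\eqref{eq:graymap}, $G$ is a deterministic function of $(\texttt{name}_\ell,x,\texttt{layer}_\ell,r,s)$ and the index $i$. It depends on neither the weights nor $P$. The tensor names, layer indices, and depth are public architectural constants. The seed $s$ and the salt $r$ are chosen by the defender independently of the attacker's payload. Hence $G\perp P$ under any $\mathcal{P}$. Note that neither PRF security nor seed secrecy is needed for this step; if $s$ is fixed, $G$ is simply a constant.

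\textbf{Step 3 (conclude).} Apply the earlier Target-Channel Erasure proposition, or argue directly. For every $p$ we have $\Prob(\hat P=q\mid P=p)=\Prob(\Gamma(G)=q)$. So $P$ and $\hat P$ are independent, and $\MI(P;\hat P)=0$. The same argument applied to the raw extracted bits $T_s(\lsbop{x}(W))$ gives $C_{\mathrm{eff}}(T_s)=0$ in \eqref{eq:ceff}. Taking the supremum over $\mathcal{P}$ preserves zero. By data processing, any post-hoc decoder (repetition, RS, interleaving) also carries zero information about $P$.

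\textbf{Main obstacle.} Step 2 is the delicate one. I will state it explicitly as a modeling condition: the defender's choice of $(s,r)$ is not a function of the payload, and the attacker cannot influence the defender's randomness. I will also remark that the listed PRF, computational-boundedness, and memorylessness assumptions are not used. They matter only for the balance and chance-level claim of Corollary~\ref{cor:chance}, not for erasure.
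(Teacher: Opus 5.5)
Your proof is correct, but it takes a different route from the paper's proof of this supplementary version.

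The paper argues through the PRF assumption. It treats $g_i=f_s(i)$ as indistinguishable from a uniformly random function and infers that each overwritten bit is uniform and independent of the corresponding payload bit. Each position is then a $\BSC(\tfrac12)$ with capacity $1-\Hb(\tfrac12)=0$, and the memoryless-channel assumption is used to sum the per-position mutual informations. You instead reuse the argument of the main-text Target-Channel Erasure proposition. Complete overwrite makes the extractor input equal to $G$ whatever $P$ is, and $G$ depends only on public constants and the defender's $(s,r)$, so $\hat P\perp P$ exactly.

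Your route gives an exact information-theoretic conclusion from weaker hypotheses:
\begin{itemize}
\item It needs no computational assumption; PRF indistinguishability alone cannot deliver an exact equality $\MI(P;\hat P)=0$.
\item It needs no balance condition.
\item It needs no per-position independence. The Gray sequence does not have that property once the phase is random: it is a deterministic function of $i+\mathcal{O}_\ell$, and adjacent codewords differ in at most one bit.
\item It covers arbitrarily correlated payload bits and any post-hoc decoder directly. It also avoids the paper's identity $\MI(P;\hat P)=\sum_i\MI(P_i;\hat P_i)$, which over a memoryless channel with correlated inputs holds only as the inequality $\le$. That gap is harmless there, since every term is zero.
\item It makes explicit the condition the paper leaves implicit: $(s,r)$ must not be chosen as a function of the payload.
\end{itemize}

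What the paper's route is really after, and yours deliberately does not provide, is the balanced $\BSC(\tfrac12)$ picture. That picture feeds the Chance-Level Recovery corollary and the coding-theoretic lemma. You are right that erasure itself does not need it.
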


\begin{proof}
For each index $i$ in a target tensor, the overwritten mantissa bits
are replaced by $g_i=f_s(i)\in\{0,1\}^x$, where $f_s(\cdot)$ is
derived from a PRF keyed by $s$ via the per-tensor phase offset
$\mathcal{O}_\ell$ and the Gray-code mapping.
By PRF security, $f_s$ is computationally indistinguishable from a
uniformly random function; consequently:
\[
  \Pr(g_{i,k}=1) = \tfrac{1}{2}
  \quad\text{for each bit } k\in[0,x),
\]
and $g_{i,k}$ is independent of the payload bit $P_{i,k}$ when $s$
is hidden from the attacker.
This induces a $\BSC(\tfrac{1}{2})$ on each transmitted bit, whose
capacity is:
\[
  C_{\BSC}\!\bigl(\tfrac{1}{2}\bigr)
  = 1 - \Hb\!\bigl(\tfrac{1}{2}\bigr) = 0.
\]
By the memoryless channel property, the extracted bits $\hat{P}_i$
are conditionally independent across positions, so:
\[
  \MI\!\bigl(P\,;\,\hat{P}\bigr)
  = \sum_{i=1}^{B}\MI\!\bigl(P_i\,;\,\hat{P}_i\bigr) = 0. \qedhere
\]
\end{proof}

\begin{corollary}[Chance-Level Recovery]
\label{cor:chance}
Under Proposition~\ref{prop:erasure},
$\E[\BitAcc_{\mathrm{post}}]=\tfrac{1}{2}$,
so the empirical Recovery Reduction $\RR$ approaches $50\%$.
\end{corollary}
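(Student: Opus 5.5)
The plan is to reduce the corollary to a per-bit calculation and then use linearity of expectation. Write the post-sanitization bit accuracy as
\[
\BitAcc_{\mathrm{post}}=\frac{1}{B}\sum_{i=1}^{B}\mathbf{1}[\hat p_i=p_i].
\]
Then
\[
\E[\BitAcc_{\mathrm{post}}]=\frac{1}{B}\sum_i \Pr(\hat p_i=p_i).
\]
So it suffices to show $\Pr(\hat p_i=p_i)=\tfrac12$ for every target position $i$. By Proposition~\ref{prop:erasure}, each extracted bit $\hat p_i$ is the overwrite bit $g_{i,k}$ at the corresponding position. The argument therefore reduces to the joint law of $(p_i,g_{i,k})$.

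The key step is the following computation. Suppose $g_{i,k}$ is independent of $p_i$ and $\Pr(g_{i,k}=1)=\tfrac12$. Conditioning on $p_i$ gives
\[
\Pr(\hat p_i=p_i)=\sum_{b\in\{0,1\}}\Pr(p_i=b)\Pr(g_{i,k}=b)=\tfrac12\sum_b\Pr(p_i=b)=\tfrac12,
\]
and this holds for any payload distribution. This is the $\BSC(\tfrac12)$ view used in the proof of Proposition~\ref{prop:erasure}. Averaging over $i$ yields $\E[\BitAcc_{\mathrm{post}}]=\tfrac12$.

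For the RR statement, recall $\RR=100(\mathrm{Rec}_{\mathrm{pre}}-\mathrm{Rec}_{\mathrm{post}})$. In expectation, with $\mathrm{Rec}_{\mathrm{pre}}=1$ (noiseless pre-defense channel), this gives $\E[\RR]=100(1-\tfrac12)=50$ pp. For the empirical value to concentrate near $50$, I would add a Chebyshev or Hoeffding step on $\mathrm{Rec}_{\mathrm{post}}$ over $B$ bits. This step uses the memoryless assumption (Definition~\ref{def:memoryless}): conditional independence across positions gives variance at most $1/(4B)$, which is negligible for payloads of $10^5$ to $10^7$ bits.

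The main obstacle is justifying marginal balance and independence of $g_{i,k}$, because the Gray sequence is deterministic given $\mathcal{O}_\ell$. I would proceed in two steps.
\begin{itemize}
\item \emph{Idealized randomness.} Under the PRF assumption, treat $\mathcal{O}_\ell$ as uniform over a range far larger than the period $2^{x+1}$. The periodic balance property of masked reflected Gray codes (each low bit takes value $1$ on exactly half of each period) then gives $\Pr(g_{i,k}=1)\approx\tfrac12$, with error $O(2^{x+1}/(2^{31}-1))$. Independence from $p_i$ holds because $\mathcal{O}_\ell$ is generated from $s$ and $r$ without reference to $P$.
\item \emph{Passing to the real keyed function.} Any PPT-detectable deviation from this idealized bound would yield a PRF distinguisher.
\end{itemize}
The computational reduction and the non-divisibility of the modulus by the period are where I expect most of the technical effort.
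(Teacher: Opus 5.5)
Your core computation follows the paper's route. The paper reads the corollary off the claim, made in the proof of the erasure proposition, that each overwrite bit is balanced and independent of the corresponding payload bit (the $\BSC(\tfrac12)$ view). So $\Prob(\hat p_i=p_i)=\tfrac12$ for every payload law, linearity gives $\E[\BitAcc_{\mathrm{post}}]=\tfrac12$, and $\mathrm{Rec}_{\mathrm{pre}}=1$ gives $\E[\RR]=50$\,pp. Your first bullet is sound and more careful than the paper, which simply asserts exact balance ``by PRF security''. Note, though, that it yields $\tfrac12$ only up to an $O(2^{x}/2^{31})$ bias plus the PRF advantage. The exact equality in the statement therefore holds only inside the idealization that ``under Proposition'' grants.

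\textbf{The concentration step does not go through.} A $1/(4B)$ variance bound needs the match indicators to be uncorrelated over the same randomness that makes each of them balanced. In your own model that randomness is a single offset per tensor. Since bit $k$ of $\mathrm{Gray}(n)$ is $n_k\oplus n_{k+1}$, every $g_i^{(\ell)}$ is a function of $(i+\mathcal{O}_\ell)\bmod 2^{x+1}$, so all positions in a tensor are coupled.

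The memoryless definition does not rescue this. It conditions on the key, under which $\hat P$ is deterministic, so it holds vacuously and gives zero variance around a key-dependent value rather than around $\tfrac12$. Without that conditioning, it is false for \GSabbr.

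Concretely, suppose a Tier-3 attacker embeds $\mathrm{Gray}(i)\,\&\,m_x$. The post-sanitization bit accuracy then equals the agreement of the masked Gray sequence with its own shift by $\mathcal{O}_\ell \bmod 2^{x+1}$. Its mean is about $\tfrac12$, but it equals $1$ whenever $\mathcal{O}_\ell\equiv0\pmod{2^{x+1}}$. Its spread (for each bit plane, roughly a triangular function of the shift) does not shrink with $B$.

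Concentration can only come from the number of independently keyed tensors the payload spans, or from the payload's own randomness. For an i.i.d.\ uniform payload, conditioning on $\mathcal{O}_\ell$ does give $1/(4B)$. The paper proves only the expectation. For this statement you should drop the step and read ``approaches $50\%$'' as $\E[\RR]=50$\,pp with $\mathrm{Rec}_{\mathrm{pre}}=1$.
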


\subsection{Coding-Theoretic Limits}

\begin{lemma}[Irrecoverability under Error-Correcting Codes]
\label{lem:ecc}
Let the sanitized channel be $\BSC(\tfrac{1}{2})$.
Then for any error-correcting code with positive rate $R>0$:
\[
  \lim_{n\to\infty}\Prob(\text{decoded payload}=P) = 0.
\]
\end{lemma}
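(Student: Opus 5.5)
Over a $\BSC(\tfrac12)$ the channel output is independent of its input. Under that fact, any decoder's output is independent of the transmitted codeword, and hence of the message. The plan is therefore to show that the success probability of any code is bounded by a guessing probability that vanishes as the block length grows; positive rate is what forces that guessing probability to zero.

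\textbf{Setup.} Fix a code of blocklength $n$ and rate $R>0$. It encodes a payload block $P$, uniform over $M=\lceil 2^{nR}\rceil$ messages, into a codeword $X^n$. The channel output is $\hat X^n$, and an arbitrary decoder $\phi$ produces $\tilde P=\phi(\hat X^n)$.

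\textbf{Step 1: independence of the output.} For a $\BSC(\tfrac12)$, $p(\hat x_i\mid x_i)=\tfrac12$ for every input. Using the memoryless property (Definition~\ref{def:memoryless}), we get
\[
p(\hat x^n\mid x^n)=2^{-n}.
\]
Equivalently, $\hat X^n$ is the codeword XORed with i.i.d.\ uniform noise, which is the one-time-pad argument. So $\hat X^n$ is uniform and independent of $X^n$, hence of $P$. This is the bitwise analogue of Proposition~\ref{prop:erasure}, and it gives $\MI(P;\hat X^n)=0$.

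\textbf{Step 2: bounding decoding success.} Since $\tilde P$ is a function of $\hat X^n$, it is also independent of $P$. Therefore
\[
\Prob(\tilde P=P)=\sum_p \Prob(P=p)\,\Prob(\tilde P=p)\le \max_p \Prob(P=p)=1/M\le 2^{-nR},
\]
which tends to $0$. For the statement as written, with an arbitrary rather than uniform payload distribution, the honest bound is $\max_p\Prob(P=p)$. I would therefore state the lemma for uniform, or high-entropy, payloads, where the max-probability is at most $2^{-nR}$. Alternatively, Fano's inequality gives
\[
\Prob(\text{error})\ge 1-\frac{\MI(P;\hat X^n)+1}{\log M}=1-\frac{1}{nR}\to 1.
\]
Either route closes the argument. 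The Fano route has the advantage of extending to a per-bit BSC with crossover close to but not exactly $\tfrac12$, giving strong-converse behavior.

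\textbf{Main obstacle.} The main obstacle is not the calculation but the hypothesis. The conclusion needs the payload to carry growing entropy; for a deterministic payload the "decoder" can simply output it. I would make the uniform, or entropy $\ge nR$, assumption explicit. I would also note that Proposition~\ref{prop:erasure} already guarantees the required independence without invoking PRF security, because the overwrite string is payload-independent. The $\BSC(\tfrac12)$ hypothesis is then used only to ensure output independence for every code.
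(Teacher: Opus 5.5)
Your argument is correct, and it takes a different and sharper route than the paper. The paper only cites Shannon's converse: reliable communication requires $R<C$, and $C_{\BSC}(\tfrac12)=0$, so no positive-rate code decodes reliably. It then lists RS, LDPC and polar codes as illustrations.

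Taken literally, that form of the converse says only that block error does not vanish, i.e.\ $\Prob(\text{decoded payload}=P)\not\to 1$. Reaching the stated limit $0$ needs a quantitative bound. Your independence computation supplies exactly that, giving $\Prob(\tilde P=P)\le\max_p\Prob(P=p)\le 2^{-nR}$, and so does Fano with $\MI=0$, giving $P_e\ge 1-1/(nR)$.

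Your observation that the claim fails for a degenerate payload prior is also correct. The paper's ``any payload distribution'' framing misses this hypothesis. You can avoid assuming a prior by using the maximal-error criterion instead. The law $q$ of $\tilde P$ does not depend on the transmitted message, so $\min_m\Prob(\tilde P=m\mid P=m)=\min_m q(m)\le 1/M\le 2^{-nR}$ for every code.

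What the converse-based route offers in exchange is generality. Combined with a strong converse, it covers any channel and any rate $R>C$, not just the exactly output-independent case.

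One side remark in your proposal is wrong. Fano does not give strong-converse behaviour for crossover $\tfrac12-\epsilon$. There $C=1-\Hb(\tfrac12-\epsilon)>0$, and Fano only yields $P_e\ge 1-C/R-1/(nR)$. That lower bound tends to $1-C/R<1$. For $R<C$, reliable coding is actually possible. Showing $P_e\to 1$ when $R>C$ requires Wolfowitz's strong converse. This does not affect your proof of the lemma, which uses only the exact $\tfrac12$ case.
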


\begin{proof}
By Shannon's channel coding converse~\cite{shannon1948mathematical},
reliable communication (vanishing block error probability as
$n\to\infty$) over a channel with capacity $C$ requires $R<C$.
For $\BSC(\tfrac{1}{2})$, $C=0$.
Hence no positive-rate code — including
RS$(n,k)$ (for which all $n$ codeword bytes are erroneous since
$e=n\gg t=\lfloor(n-k)/2\rfloor$; for RS(255,127): $255\gg64$),
any regular LDPC code (for which BP decoding achieves
$\mathrm{BER}\to\tfrac{1}{2}$ since
$p=0.5\gg p^*\approx0.084$~\cite{gallager1962low}), or Polar codes
(for which all synthetic bit-channels polarize to pure noise since
$\sum_i C^{(i)}=n\cdot C_{\BSC}(\tfrac{1}{2})=0$~\cite{arikan2009channel})
— can achieve reliable decoding on the sanitized channel.
\end{proof}

\subsection{Accuracy Stability Bound}

\begin{proposition}[Lipschitz Accuracy Bound]
\label{prop:lipschitz}
Let the task loss $\mathcal{L}:\calW\to\R$ be
$L_{\mathrm{lip}}$-Lipschitz.
Then:
\[
  \bigl|\Delta\mathrm{Acc}\bigr|
  \leq L_{\mathrm{lip}} \cdot \|W'-W\|_2
  = L_{\mathrm{lip}} \cdot \|W\|_2 \cdot \Drel(W,W').
\]
\end{proposition}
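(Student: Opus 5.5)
The plan is to argue exactly as for Theorem~\ref{thm:acc}, reading $\Delta\mathrm{Acc}$ here as the change $\mathcal{L}(W')-\mathcal{L}(W)$ of the Lipschitz task functional. The main text already notes that accuracy itself is evaluated only empirically, and the 0--1 accuracy is a step function that cannot be Lipschitz. So the first thing I will do is state this convention explicitly: either $\Delta\mathrm{Acc}$ denotes the loss change, or $\mathcal{L}$ is taken to be a Lipschitz surrogate, for example a margin-based or expected-probability accuracy.

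Next I will fix the geometry. All parameters of the model, including the non-target tensors (embeddings, heads) that \GS leaves unchanged, are concatenated into one vector in $\calW\cong\R^{N}$ equipped with the Euclidean norm. Here $W$ is the poisoned checkpoint and $W'=T_s(W)$ is its sanitized version. By \eqref{eq:overwrite} the map changes only the low $x$ mantissa bits of target entries. The difference $W'-W$ is therefore supported on the target tensors, and its norm is the quantity reported as $\|W'-W\|_2$; no further structure of the Gray overwrite is needed.

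The inequality then follows in one line from the definition of $L_{\mathrm{lip}}$-Lipschitz continuity with respect to $\|\cdot\|_2$:
\[
|\mathcal{L}(W')-\mathcal{L}(W)|\le L_{\mathrm{lip}}\|W'-W\|_2 .
\]
For the equality I will invoke the definition $\Drel(W,W')=\|W'-W\|_2/\|W\|_2$, which requires $\|W\|_2>0$ (true for any nontrivial pretrained model). Multiplying through by $\|W\|_2$ gives $\|W'-W\|_2=\|W\|_2\,\Drel(W,W')$, which yields the stated form.

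The only real obstacle is the interpretive one: the gap between accuracy and a Lipschitz loss. I will handle it by stating the surrogate assumption rather than attempting to prove a bound for discontinuous accuracy. Optionally, I would add a remark bounding $\|W'-W\|_2$ a priori. Each overwritten entry moves by less than $2^{x}$ ulps at its exponent, so $\|W'-W\|_2^2\le\sum_i (2^{x-23}2^{e_i-127})^2$. This makes the right-hand side explicitly computable from the exponents alone and consistent with the reported $\Drel$ values.
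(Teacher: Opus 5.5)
Your proposal is correct and matches the paper's proof, which is also a one-line application of the Lipschitz definition, $|\mathcal{L}(W')-\mathcal{L}(W)|\le L_{\mathrm{lip}}\|W'-W\|_2$, followed by the substitution $\|W'-W\|_2=\|W\|_2\,\Drel(W,W')$. Your explicit convention that $|\Delta\mathrm{Acc}|$ stands for the change in a Lipschitz loss or surrogate is a point the paper leaves implicit: its proof bounds $|\mathcal{L}(f_{W'})-\mathcal{L}(f_W)|$ while the statement labels this $|\Delta\mathrm{Acc}|$. Because the inequality holds for any pair $W,W'$, it also does not matter whether $W$ is the poisoned checkpoint, as you take it, or the clean one, as the paper's definition $\Delta\mathrm{Acc}=\mathrm{Acc}_{\mathrm{clean}}-\mathrm{Acc}_{\mathrm{defended}}$ suggests.
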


\begin{proof}
Directly from Lipschitz continuity:
$|\mathcal{L}(f_{W'})-\mathcal{L}(f_W)|
\leq L_{\mathrm{lip}}\|W'-W\|_2
= L_{\mathrm{lip}}\|W\|_2\cdot\Drel(W,W')$.
In the completed experiments, worst-case $\Drel=5.4{\times}10^{-3}$
at $x=19$.
Spectral-norm analyses of BERT-scale
models~\cite{sakai2026spectrum} give $L_{\mathrm{lip}}=O(10^2)$, yielding
expected accuracy degradation consistent with the sub-$1\%$ changes
reported in Tables~2 and~3 of the main paper.
\end{proof}

\subsection{Robustness under Imperfect Randomness}

\begin{lemma}[Capacity under Imperfect Randomness]
\label{lem:imperfect}
Suppose the induced channel satisfies
$\Pr(\hat{P}_{i,k}=P_{i,k})=\tfrac{1}{2}+\epsilon$ for small
$\epsilon>0$.
Then the effective capacity is bounded by:
\[
  C_{\mathrm{eff}} \leq 1 - \Hb\!\bigl(\tfrac{1}{2}+\epsilon\bigr)
  = O(\epsilon^2).
\]
\end{lemma}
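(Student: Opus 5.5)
The plan is to reduce the statement to a single-letter computation for a binary symmetric channel and then bound that quantity by an explicit quadratic in $\epsilon$. First I will make the channel model precise. Following the framing of Appendix~\ref{app:theory-framing} and the memoryless assumption (Definition~\ref{def:memoryless}), each target bit position is one use of a binary-input, binary-output channel. I read the hypothesis $\Pr(\hat P_{i,k}=P_{i,k})=\tfrac12+\epsilon$ as holding for each input value. The channel is then $\BSC(q)$ with crossover probability $q=\tfrac12-\epsilon$. Since $\Hb(q)=\Hb(1-q)$, its capacity is
\[
C_{\BSC}(q)=1-\Hb(q)=1-\Hb\bigl(\tfrac12+\epsilon\bigr).
\]
I will prove this in the standard way. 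First write $\MI(P_{i,k};\hat P_{i,k})=H(\hat P_{i,k})-\Hb(q)$. Then use $H(\hat P_{i,k})\le 1$, with equality at the uniform input. For the full block, memorylessness gives $\MI(P;\hat P)\le\sum_{i,k}\MI(P_{i,k};\hat P_{i,k})$, which is the usual single-letterization. So $C_{\mathrm{eff}}$, normalized per channel use, is at most $1-\Hb(\tfrac12+\epsilon)$. This establishes the inequality part of the statement.

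For the order-of-magnitude claim, I will avoid a bare Taylor argument with an unspecified remainder and instead use a global bound. The key observation is the identity
\[
1-\Hb(p)=D\bigl(\mathrm{Bern}(p)\,\|\,\mathrm{Bern}(\tfrac12)\bigr)
\]
in bits. I will then invoke the standard inequality $D(\mu\|\nu)\le \chi^2(\mu\|\nu)$, with $D$ measured in nats. For $p=\tfrac12+\epsilon$ this gives
\[
\chi^2=\frac{\epsilon^2}{1/2}+\frac{\epsilon^2}{1/2}=4\epsilon^2,
\]
and hence
\[
1-\Hb\bigl(\tfrac12+\epsilon\bigr)\le \frac{4\epsilon^2}{\ln 2}\quad\text{for all }\epsilon\in[0,\tfrac12].
\]
This is an honest $O(\epsilon^2)$ bound with an explicit constant.

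As a sanity check on sharpness, I will note the expansion. We have $\Hb'(\tfrac12)=0$ and $\Hb''(\tfrac12)=-4/\ln 2$. Therefore $1-\Hb(\tfrac12+\epsilon)=\tfrac{2}{\ln 2}\epsilon^2+O(\epsilon^4)$, so the quadratic order cannot be improved.

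The main obstacle is the modelling step rather than the calculus. The hypothesis as stated constrains only an averaged correctness probability. If the per-input error rates differ, the channel is a binary asymmetric channel and the formula $1-\Hb(\cdot)$ is not literally its capacity. I would handle this by stating the symmetric interpretation explicitly as part of the hypothesis. If that is not acceptable, I would fall back on the following argument. For a binary channel whose conditional output distributions $W(\cdot\mid 0)$ and $W(\cdot\mid 1)$ both lie within total variation $O(\epsilon)$ of $\mathrm{Bern}(\tfrac12)$, one has
\[
\MI\le \max_{a} D\bigl(W(\cdot\mid a)\,\|\,\mathrm{Bern}(\tfrac12)\bigr).
\]
This follows from the radius characterization of capacity, $C\le\max_a D(W(\cdot\mid a)\|Q)$ for any output distribution $Q$. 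The same $\chi^2$ bound then again yields $O(\epsilon^2)$.
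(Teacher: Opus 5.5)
Your argument is correct, but it takes a different route from the paper. The paper's proof treats $C_{\mathrm{eff}}\le 1-\Hb(\tfrac12+\epsilon)$ as the known BSC capacity formula without argument. It then gets the order of magnitude only from the local Taylor expansion $\Hb(\tfrac12+\epsilon)=1-\tfrac{2}{\ln 2}\epsilon^2+O(\epsilon^4)$. You do two things it does not.
\begin{itemize}
\item You derive the inequality, via $\MI(P_{i,k};\hat P_{i,k})=H(\hat P_{i,k})-\Hb(q)\le 1-\Hb(q)$ and single-letterization.
\item You replace the asymptotic expansion with the global estimate $1-\Hb(p)=D(\mathrm{Bern}(p)\,\|\,\mathrm{Bern}(\tfrac12))\le\chi^2/\ln 2=4\epsilon^2/\ln 2$. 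The Taylor expansion then serves only to show that the exponent $2$ is sharp.
\end{itemize}
The paper's route gives the exact leading constant $2/\ln 2$, but only as $\epsilon\to0$ and with an unquantified remainder. Yours gives a bound valid on all of $[0,\tfrac12]$, at the cost of a constant that is a factor of $2$ loose. If you want a sharper uniform constant, $\Hb(p)\ge 4p(1-p)$ gives $1-\Hb(\tfrac12+\epsilon)\le 4\epsilon^2$.

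\paragraph{Single-letterization assumption.} This step needs the product-form channel law $p(\hat p\mid p)=\prod W(\hat p_{i,k}\mid p_{i,k})$. Your phrase ``each bit position is one use of a binary channel'' supplies it. It is, however, stronger than the paper's memoryless definition, which only asks that outputs be conditionally independent given all of $P$. State it explicitly as an assumption.

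\paragraph{Averaged reading of the hypothesis.} Your concern here is more than cosmetic: under that reading the stated bound actually fails. Take a Z-type channel where $0\mapsto 0$ always and $1\mapsto 1$ with probability $2\epsilon$, and use a uniform input.
\begin{itemize}
\item It satisfies $\Pr(\hat P=P)=\tfrac12+\epsilon$.
\item Its mutual information is $\MI=\Hb(\epsilon)-\tfrac12\Hb(2\epsilon)\approx\epsilon$ bits.
\item This is far above $1-\Hb(\tfrac12+\epsilon)\approx\tfrac{2}{\ln 2}\epsilon^2$.
\end{itemize}
So the per-input (symmetric) reading must be part of the hypothesis; the paper's proof uses it only implicitly. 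Likewise, the total-variation condition in your fallback is an additional assumption, not a consequence of the averaged one.
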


\begin{proof}
The binary entropy function achieves its maximum of $1$ at $p=\tfrac{1}{2}$.
Taylor expansion around $p=\tfrac{1}{2}$ gives:
\[
  \Hb\!\bigl(\tfrac{1}{2}+\epsilon\bigr)
  = 1 - \tfrac{2}{\ln 2}\,\epsilon^2 + O(\epsilon^4).
\]
Hence $C_{\mathrm{eff}}\leq 1-\Hb(\tfrac{1}{2}+\epsilon)=O(\epsilon^2)$.
This shows that even modest deviations from perfect uniformity still
yield near-zero capacity, reinforcing the robustness of \GS beyond
the idealized PRF model.
\end{proof}

\subsection{Threat Model}
\label{app:threat-model}

We consider a supply-chain adversary who embeds a binary payload into the mantissa LSBs of FP32 weights of a pretrained Transformer distributed through a public repository. The adversary has full white-box access to the model prior to distribution; the defender applies post-training sanitization before publication and has no access to the training data.

\subsubsection{Target parameter set and hiding capacity}

We target the \emph{encoder blocks} of the Transformer backbone:
attention (Q/K/V/O) and feed-forward (MLP) weight matrices, which
contain the dominant fraction of trainable parameters.
Embedding layers and task-specific classifier heads are excluded by
default. The hiding capacity at overwrite depth $x$ is:
\begin{equation}
  C(\theta_{\mathrm{target}}, x)
  = \sum_{w \in \theta_{\mathrm{target}}} \numel(w)\cdot x,
  \quad x\in[1,23].
  \label{eq:capacity}
\end{equation}

\subsubsection{Multi-precision threat surface}
\label{ssec:multiprec}

The implemented evaluation targets FP32. The analysis below extends the threat surface to deployment-grade precision formats; empirical validation of these formats is left as future work.

\noindent\textbf{FP16 and BF16.}
FP16 (1-5-10) and BF16 (1-8-7) reduce mantissa depth to
$m_{\mathrm{FP16}}=10$ and $m_{\mathrm{BF16}}=7$ bits, giving
adjusted capacity $C_F=\sum_w\numel(w)\cdot\min(x,m_F)$.
For BERT-base encoder blocks at $x=5$ in FP16,
$C_{\mathrm{FP16}}\approx53$\,MB, sufficient for both evaluated
payloads.

\noindent\textbf{INT8.}
Each INT8 weight embeds $x\leq2$ bits at quantization error
$\delta_i=q\cdot\mathrm{scale}_\ell$ where $q\leq1/128$, a negligible
stealth cost at $x=1$.

\subsubsection{Three-tier attacker knowledge hierarchy}
\label{ssec:tiers}

\begin{definition}[Attacker Tiers]
\label{def:tiers}
\mbox{}
\begin{itemize}
  \item \textbf{Tier~1 (Black-box).}  Knows architecture and parameter
    count; uses standard LSB substitution without knowledge of the
    defender's method.  Corresponds to \textsc{Naive}.
  \item \textbf{Tier~2 (Gray-box).}  Knows a sanitization step of
    unspecified type will be applied; responds with redundancy coding
    (\textsc{Repeat3}/\textsc{Repeat5}) or burst-resilient interleaving.
  \item \textbf{Tier~3 (White-box).}  Knows the sanitization algorithm
    (\GS) but not the seed $s$.  Strongest realistic attacker.
\end{itemize}
\end{definition}

\section{Notation and Abbreviations}
\label{app:notation}

\begin{longtable}{@{} p{2.8cm} p{8.8cm} @{}}
\caption{Consolidated notation and abbreviations used throughout the paper.} \label{tab:notation} \\
\toprule
\textbf{Symbol / Abbrev.} & \textbf{Definition} \\
\midrule
\endfirsthead

\toprule
\textbf{Symbol / Abbrev.} & \textbf{Definition} \\
\midrule
\endhead

\multicolumn{2}{r}{\textit{continued on next page}} \\
\endfoot

\bottomrule
\endlastfoot

\multicolumn{2}{@{}l}{\textit{Weight-space and attack symbols}} \\
$W, \calW$ & Weight tensor; full model weight space \\
$x$ & LSB overwrite depth ($1 \leq x \leq 23$ for FP32) \\
$m_x$ & Bitmask $(1 \ll x) - 1$; selects lowest $x$ bits \\
$u_i$ & uint32 bitcast of weight $w_i$ \\
$N$ & Total number of target parameters \\
$\theta_{\mathrm{target}}$ & Target parameter set (encoder blocks) \\
$C(\theta,x)$ & Storage capacity $Nx$ bits \\
$C_{\mathrm{eff}}$ & Effective Shannon capacity after sanitization \\
\midrule
\multicolumn{2}{@{}l}{\textit{Evaluation metrics}} \\
$\BitAcc$ & Bit accuracy $\frac{1}{B}\sum_i \mathbf{1}[\hat{p}_i = p_i]$ (lower better) \\
$\RR$ & Recovery Reduction $(\mathrm{Rec}_{\mathrm{pre}} - \mathrm{Rec}_{\mathrm{post}}) \times 100\%$ \\
$\Drel$ & Relative $\ell_2$ distance $\|W' - W\|_2 / \|W\|_2$ \\
$W_1$ & 1-Wasserstein distance between weight distributions \\
$\Delta\mathrm{Acc}$ & Accuracy drop $\mathrm{Acc}_{\mathrm{clean}} - \mathrm{Acc}_{\mathrm{defended}}$ \\
$\mathrm{HD}$ & Hamming distance (higher better for defense) \\
$B$ & Payload length in bits \\
\midrule
\multicolumn{2}{@{}l}{\textit{GrayShield-specific symbols}} \\
$s$ & Secret owner seed ($|s| = 128$ bits) \\
$r$ & Per-run salt (domain separation) \\
$H(\cdot; s)$ & HMAC-SHA256 PRF keyed with $s$ \\
$\mathcal{O}_\ell$ & Per-tensor phase offset \\
$g_i^{(\ell)}$ & Gray-code overwrite value at index $i$, tensor $\ell$ \\
\midrule
\multicolumn{2}{@{}l}{\textit{Information-theoretic symbols}} \\
$\BSC(p)$ & Binary symmetric channel with crossover probability $p$ \\
$\Hb(p)$ & Binary entropy $-p\log_2 p - (1-p)\log_2(1-p)$ \\
$C_{\BSC}(p)$ & BSC capacity $1 - \Hb(p)$ (zero at $p = 1/2$) \\
$\MI(P;\hat{P})$ & Mutual information between payload $P$ and extracted bits $\hat{P}$ \\
\midrule
\multicolumn{2}{@{}l}{\textit{Precision formats}} \\
FP32 & 32-bit single-precision (1 sign, 8 exponent, 23 mantissa) \\
FP16 & 16-bit half-precision (1-5-10) \\
BF16 & Brain Float 16 (1-8-7) \\
INT8 & 8-bit signed integer with per-tensor scale \\
\midrule
\multicolumn{2}{@{}l}{\textit{Coding theory terms}} \\
ECC & Error-Correcting Code \\
RS & Reed--Solomon code (MDS over $\mathrm{GF}(2^8)$) \\
LDPC & Low-Density Parity-Check code (BP decoding) \\
Polar & Polar code (SCL decoding) \\
\midrule
\multicolumn{2}{@{}l}{\textit{Defense abbreviations}} \\
GS & GrayShield (proposed) \\
PM & PatternMask: fixed-pattern LSB overwrite \\
RF & RandomFlip: per-bit Bernoulli flip \\
FT & FineTune: fine-tuning on clean samples \\
GN & GaussNoise: additive Gaussian noise \\
PTQ & Post-Training INT8 Quantization \\
SWP & Selective Weight Perturbation \\
\midrule
\multicolumn{2}{@{}l}{\textit{Models and datasets}} \\
BERT & BERT-base (110M params, SST-2) \\
RoBERTa & RoBERTa-base (125M params, SST-2) \\
ViT & ViT-Base/16 (86M params, CIFAR-10) \\
Swin & Swin-Tiny (29M params, CIFAR-10) \\
SST-2 & Stanford Sentiment Treebank (binary) \\
CIFAR-10 & 10-class image benchmark \\
\bottomrule
\end{longtable}

\section{Formal Definitions}
\label{app:defs}

Definitions are ordered from the attack primitive upward to the
defense guarantee.

\begin{definition}[Payload]
\label{def:payload}
A \emph{payload} is a binary string
$\mathbf{p}=(p_1,\ldots,p_B)\in\{0,1\}^B$
of length $B\in\mathbb{N}$, drawn from an arbitrary distribution
$\mathcal{P}$ over $\{0,1\}^B$.
The analysis in Section~\ref{app:theory} holds for every $\mathcal{P}$,
including adversarially chosen fixed strings.
\end{definition}

\begin{definition}[Covert Channel]
\label{def:covert_channel}
The \emph{LSB covert channel} induced by the target parameter set
$\theta_{\mathrm{target}}$, overwrite depth $x$, and model weights
$W\in\calW$ is the tuple
$(\mathcal{E},\mathcal{D},\calW,\{0,1\}^{C})$,
where $C=C(\theta_{\mathrm{target}},x)$ is the storage capacity
(Eq.~\ref{eq:capacity}),
the \emph{encoder} $\mathcal{E}:\calW\times\{0,1\}^C\to\calW$
produces a weight tensor $W'=\mathcal{E}(W,\mathbf{p})$ by writing
payload bits into the $x$ mantissa LSBs of target weights,
and the \emph{decoder} $\mathcal{D}:\calW\to\{0,1\}^C$ recovers
the payload by reading back those same bit positions:
$\mathcal{D}(W')=\mathbf{p}$ with probability one in the
noiseless pre-defense setting.
\end{definition}

\begin{definition}[Storage Capacity vs.\ Shannon Capacity]
\label{def:capacity_distinction}
The \emph{storage capacity} $C(\theta_{\mathrm{target}},x)=Nx$
(Eq.~\ref{eq:capacity}) is the number of payload bits the encoder can
embed in $N$ target weights at depth $x$; it is a counting quantity.
The \emph{effective Shannon capacity}
$C_{\mathrm{eff}}(T)=\MI(P;\hat{P})$
(Eq.~\ref{eq:ceff}) is the maximum mutual information between the
embedded payload $P$ and the extracted sequence $\hat{P}$ after
sanitization map $T$; it is an information-theoretic quantity measured
in bits per channel use.
The goal of \GS is to reduce $C_{\mathrm{eff}}$ to zero while leaving
$C(\theta_{\mathrm{target}},x)$ unchanged.
\end{definition}

\begin{definition}[Post-Training Sanitization Map]
\label{def:sanitization}
A \emph{post-training sanitization map} is a function
$T:\calW\times\mathcal{K}\to\calW$,
where $\calW=\prod_\ell\mathbb{R}^{N_\ell}$ is the model weight space
and $\mathcal{K}$ is a key space, satisfying:
\begin{enumerate}
  \item \textbf{Data-free:} $T(W,k)$ does not require access to any
    training or validation data.
  \item \textbf{Weight-only:} $T$ modifies only the weight tensor $W$;
    model architecture, tokenizer, and configuration files are
    unchanged.
  \item \textbf{$O(N)$ complexity:} The runtime of $T$ is linear in the number of target parameters $N=\sum_\ell N_\ell$.
\end{enumerate}
\GS instantiates $T$ with $\mathcal{K}=\{0,1\}^{|s|}$ (the seed
space), deterministically as described in
Section~\ref{app:theory-framing}.
\end{definition}

\begin{definition}[Pseudorandom Function]
\label{def:prf}
A \emph{pseudorandom function} (PRF) is a keyed function
$H:\{0,1\}^*\times\{0,1\}^{|s|}\to\{0,1\}^\lambda$
such that no probabilistic polynomial-time adversary can distinguish
$H(\cdot,s)$ from a uniformly random function
$f:\{0,1\}^*\to\{0,1\}^\lambda$ with non-negligible advantage, for a
uniformly sampled key $s\leftarrow\{0,1\}^{|s|}$.
Formally, for all PPT distinguishers $\mathcal{A}$:
\begin{equation}
  \left|\Prob_{s}[\mathcal{A}^{H(\cdot,s)}=1]
  -\Prob_{f}[\mathcal{A}^{f(\cdot)}=1]\right|
  \leq \mathrm{negl}(|s|).
  \label{eq:prf}
\end{equation}
\GS instantiates $H$ as HMAC-SHA256, which satisfies
Eq.~\eqref{eq:prf} under standard assumptions on the SHA-256
compression function~\cite{bellare1996keying}.
\end{definition}

\begin{definition}[Effective Channel Capacity]
\label{def:ceff}
Let $T:\calW\times\mathcal{K}\to\calW$ be a sanitization map and
let $\hat{P}=\mathcal{D}(T(W,k))$ denote the payload extracted from
the sanitized weights.
The \emph{effective channel capacity} of the covert channel under
defense $T$ with key $k$ is
\begin{equation}
  C_{\mathrm{eff}}(T,k)
  \;=\; \sup_{\mathcal{P}}\;\MI_{\mathcal{P}}\!\bigl(P\,;\,\hat{P}\bigr),
  \label{eq:ceff_formal}
\end{equation}
where the supremum is taken over all payload distributions
$\mathcal{P}$ on $\{0,1\}^B$, and $\MI_{\mathcal{P}}$ denotes
mutual information under that distribution.
Proposition~\ref{prop:erasure} establishes that
$C_{\mathrm{eff}}(\GS,s)=0$ for all $s\in\mathcal{K}$ unknown to the
attacker under the idealized PRF model.
\end{definition}

\begin{definition}[Payload Recovery]
\label{def:recovery}
Given an embedded payload $\mathbf{p}\in\{0,1\}^B$ and a decoded
sequence $\hat{\mathbf{p}}\in\{0,1\}^B$ obtained after applying
defense $T$, the \emph{payload recovery rate} is
\begin{equation}
  \mathrm{Rec}(\mathbf{p},\hat{\mathbf{p}})
  \;=\; \frac{1}{B}\sum_{i=1}^{B}\mathbf{1}[\hat{p}_i=p_i]
  \;=\; \BitAcc(\mathbf{p},\hat{\mathbf{p}}).
  \label{eq:rec}
\end{equation}
The \emph{Recovery Reduction} is then
$\RR = (\mathrm{Rec}_{\mathrm{pre}}-\mathrm{Rec}_{\mathrm{post}})
\times100\%$,
where $\mathrm{Rec}_{\mathrm{pre}}=\BitAcc(\mathbf{p},
\mathcal{D}(\mathcal{E}(W,\mathbf{p})))$ is the pre-defense recovery
rate (effectively $1.0$ for aligned injection at $x\leq16$) and
$\mathrm{Rec}_{\mathrm{post}}=\BitAcc(\mathbf{p},
\mathcal{D}(T(\mathcal{E}(W,\mathbf{p}),k)))$ is the post-defense
recovery rate.
\end{definition}

\begin{definition}[Stealth]
\label{def:stealth}
An embedding $\mathcal{E}(W,\mathbf{p})$ is \emph{$(\epsilon_\ell,
\epsilon_c,\epsilon_a)$-stealthy} if the following three conditions
hold simultaneously:
\begin{align}
  \Drel(W,\mathcal{E}(W,\mathbf{p})) &\leq \epsilon_\ell
    &&\text{(perturbation magnitude)},
    \label{eq:stealth_l2}\\
  1-\mathrm{Sim}_{\cos}(W,\mathcal{E}(W,\mathbf{p})) &\leq \epsilon_c
    &&\text{(directional preservation)},
    \label{eq:stealth_cos}\\
  |\mathrm{Acc}(f_W)-\mathrm{Acc}(f_{\mathcal{E}(W,\mathbf{p})})| &\leq \epsilon_a
    &&\text{(utility preservation)}.
    \label{eq:stealth_acc}
\end{align}
In the present study, the empirical capacity boundary of
$(10^{-4},2.6{\times}10^{-7},0.01)$-stealth is confirmed at $x{=}16$
across all four model presets, degrading to
$(8.5{\times}10^{-2},3.5{\times}10^{-3},0.01)$-stealth at $x{=}23$.
\end{definition}

\begin{definition}[Perturbation Field]
\label{def:pert_field}
Let $W^{(\ell)}\in\R^{N_\ell}$ be a target parameter tensor and let
$W^{(\ell)\prime}=T_\ell(W^{(\ell)})$ denote its sanitized version.
The \emph{perturbation field} of sanitization method $M$ on tensor
$\ell$ is the function
$\Delta^{(M,\ell)}:\{0,\ldots,N_\ell{-}1\}\to\R_{\geq0}$
defined by
\begin{equation}
  \Delta^{(M,\ell)}_i
  \;=\; \abs{w^{(\ell)\prime}_i - w^{(\ell)}_i}
  \;=\; g_i^{(\ell)}\cdot 2^{-23}\cdot 2^{e_i-127},
  \label{eq:pert_field}
\end{equation}
where $g_i^{(\ell)}$ is the overwrite value at position $i$ and the
indices follow the canonical row-major flattening order of the
parameter tensor.
The total perturbation field $\Delta^{(M)}$ of the full model is the
concatenation of $\Delta^{(M,\ell)}$ across all target tensors in
flattening order.
\end{definition}

\begin{definition}[Total-Variation Smoothness]
\label{def:tv_smooth}
The perturbation field $\Delta^{(M)}$ of method $M$ is
\emph{$\tau$-total-variation smooth} if
\begin{equation}
  E_{\mathrm{TV}}(M)
  \;=\; \sum_{i=0}^{N-2}\abs{\Delta^{(M)}_{i+1}-\Delta^{(M)}_i}
  \;\leq\; \tau.
  \label{eq:tv_def}
\end{equation}
A method with lower $E_{\mathrm{TV}}$ introduces more slowly varying
perturbations in the flattened weight order, reducing (i)~the
high-frequency spectral energy detectable by frequency-domain anomaly
scanners, and (ii)~the risk of exciting sharp curvature directions in
the task-loss Hessian.
The Gray-code property $\Ham(g_i^{(\ell)},g_{i+1}^{(\ell)})\leq1$
implies $\tau_{\GS}\leq N\cdot2^{-23}\cdot\E[2^{e_i}]$,
whereas \PMabbr with a period-2 pattern achieves
$\tau_{\PMabbr}=\Theta(N\Delta_{\max})$, giving
$\tau_{\GSabbr}\ll\tau_{\PMabbr}$.
\end{definition}

\begin{definition}[Memoryless Covert Channel]
\label{def:memoryless}
The covert channel
$(\mathcal{E},\mathcal{D},\calW,\{0,1\}^C)$ is
\emph{memoryless with respect to sanitization $T$}
if, for all $i\neq j$, the random variables
$\hat{P}_i=\mathcal{D}(T(W,k))_i$ and
$\hat{P}_j=\mathcal{D}(T(W,k))_j$
are conditionally independent given $P=(P_1,\ldots,P_B)$:
\begin{equation}
  \hat{P}_i \perp \hat{P}_j \mid P,\;k
  \qquad\forall\,i\neq j.
  \label{eq:memoryless}
\end{equation}
This condition holds for \GS because
$g_i^{(\ell)}=((i{+}\mathcal{O}_\ell)\oplus((i{+}\mathcal{O}_\ell)\gg1))\,\&\,m_x$
is a deterministic function of $i$ alone (given $\mathcal{O}_\ell$),
with no dependence on $g_j^{(\ell)}$ for $j\neq i$.
The post-defense extracted bits are therefore independent across
positions, justifying the memoryless BSC model.
\end{definition}

\begin{definition}[Chance-Level Recovery]
\label{def:chance}
For a binary payload channel with crossover probability $p$,
\emph{chance-level recovery} is the expected bit accuracy under
the maximum-entropy noise distribution:
\begin{equation}
  \mathrm{Rec}_{\mathrm{chance}}
  \;=\; \Prob(\hat{P}_i=P_i)
  \;=\; 1-p
  \;\xrightarrow{p\to1/2}\; \tfrac{1}{2}.
  \label{eq:chance}
\end{equation}
Proposition~\ref{prop:erasure} establishes that \GS drives $p\to1/2$
in the idealized PRF model, so $\mathrm{Rec}_{\mathrm{post}}\to1/2$
and $\RR\to50\%$.
``Near-chance sanitization'' in our empirical results refers to
$\mathrm{Rec}_{\mathrm{post}}\in[0.49,0.51]$, corresponding to a
channel crossover probability indistinguishable from $1/2$ at the
measured sample sizes.
\end{definition}

\section{Extended Trade-off Analysis}
\label{app:rq3_tradeoff}

Figure~\ref{fig:rq3-appendix-tradeoff} provides an extended view of the
security--utility trade-off across defenses under both baseline and
adaptive attacker settings. We place this figure in the appendix because the full configuration-level visualization is informative but visually
dense relative to the main paper figures.

The figure complements the main RQ3 analysis by showing the distribution
of recovery reduction (RR) and accuracy degradation across all evaluated
model, payload, and attacker configurations. It confirms that the high-level ordering observed in the main results remains stable across
attacker variants, while also exposing important differences in variance, operating-point consistency, and deployment robustness among defenses.

\begin{figure}[!ht]
  \centering
  \includegraphics[width=0.8\linewidth]{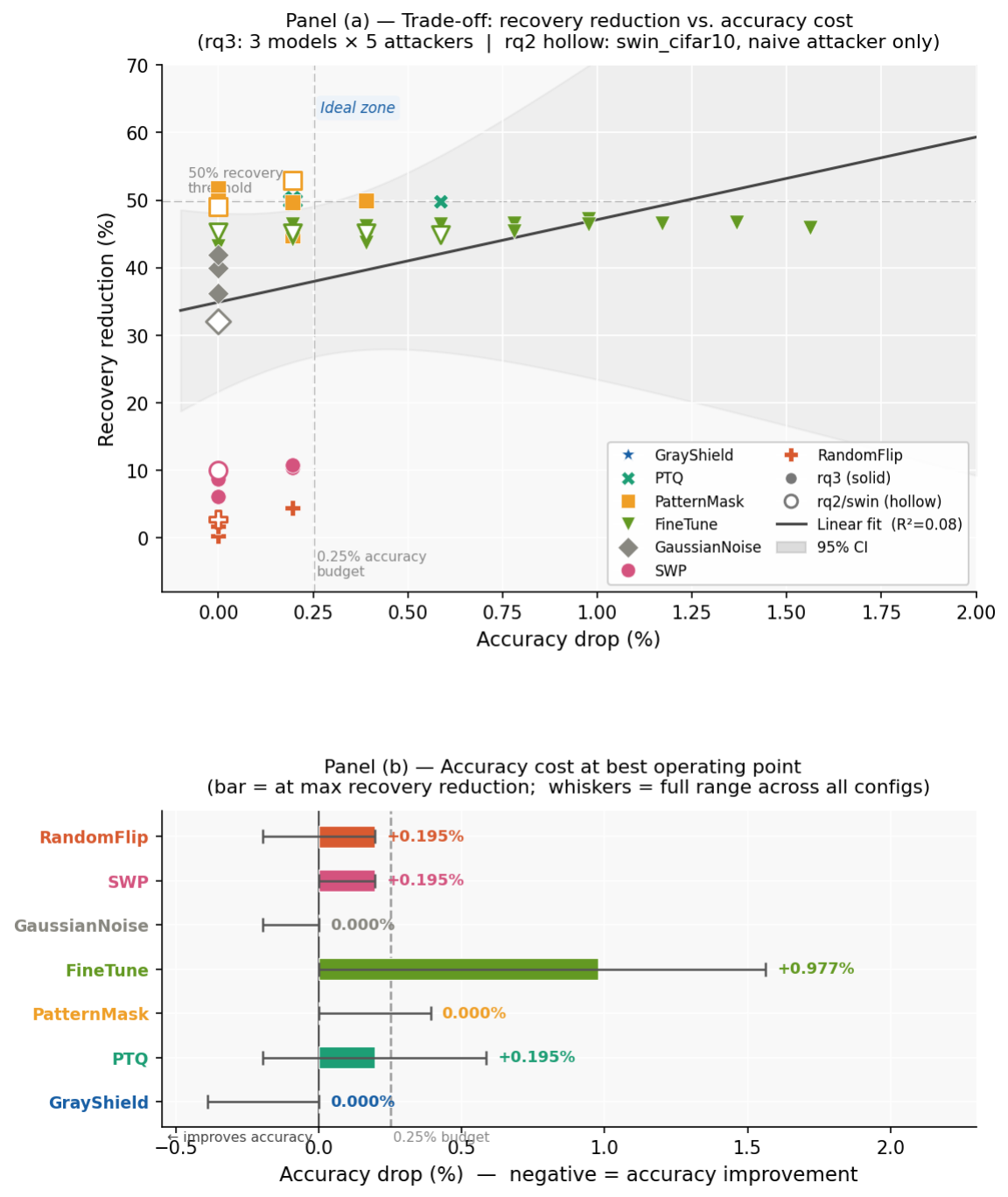}
  \caption{
\textbf{Security--utility trade-off across defenses.}
\textbf{Panel (a)} plots recovery reduction (RR) versus accuracy drop for all defense configurations. Solid markers denote configurations evaluated under the RQ3 adaptive attacker setting; hollow markers indicate the additional \texttt{swin\_cifar10} configuration from RQ2. Hollow markers show the additional \texttt{swin\_cifar10} configuration from RQ2. The horizontal dashed line marks the desired $50\%$ chance-level recovery threshold, while the vertical dashed line indicates the practical $0.25\%$ accuracy budget. \GSabbr{} consistently occupies the upper-left ``ideal zone,'' achieving near-chance RR with minimal accuracy degradation and low variance. \FTabbr{} achieves competitive RR only at substantially higher utility cost, whereas \GNabbr{}, \SWPabbr{}, and \RFabbr{} fail to provide effective sanitization. The weak global trend ($R^2=0.08$) indicates that stronger sanitization does not necessarily require larger accuracy degradation. 
\textbf{Panel (b)} shows the accuracy drop at each defense's best operating point, with whiskers indicating the full observed range across configurations. \GSabbr{} achieves near-optimal RR at effectively zero accuracy cost while maintaining substantially tighter variability than competing high-RR defenses.}
\label{fig:rq3-appendix-tradeoff}
\end{figure}

\section{Multi-Precision Threat Extension}
\label{app:multiprec}

The main evaluation targets FP32.
FP16~(1-5-10) and BF16~(1-8-7) reduce mantissa depth to
$m_{\mathrm{FP16}}=10$ and $m_{\mathrm{BF16}}=7$ bits respectively,
giving adjusted hiding capacity
$C_F=\sum_w\numel(w)\cdot\min(x,m_F)$.
For BERT-base encoder blocks at $x=5$ in FP16,
$C_{\mathrm{FP16}}\approx53$\,MB, sufficient for both evaluated
payloads.
Proposition~\ref{prop:erasure} extends to these formats unchanged,
since the Gray-code overwrite applies identically by clamping
$x\leq m_F$.
Each INT8 weight supports $x\leq2$ bits at quantization error
$\delta_i=q\cdot\mathrm{scale}_\ell$ with $q\leq1/128$, a negligible
stealth cost at $x=1$.
Empirical validation of reduced-precision formats is left as future
work.

\section{Bit Manipulation in FP32 Weights}
\label{app:bitops}

A float32 value stores sign (1 bit), exponent (8 bits), and mantissa
(23 bits).
All attack and defense operations act exclusively on mantissa LSBs,
preserving sign and exponent unchanged.

Let $u=\mathrm{bitcast}(w)\in\{0,1\}^{32}$ be the uint32 view of
weight $w$ and $m=(1\ll x)-1$ the LSB mask.
The three core operations are:
\begin{itemize}
  \item \textit{Extraction:} $u\;\&\;m$.
  \item \textit{Overwrite / sanitization:}
    $u'=(u\;\&\;\sim m)\;|\;b$,\quad $b\in[0,2^x-1]$.
  \item \textit{Random flip:}
    $u'=u\oplus r$ within the masked region, where $r$ is sampled
    from $\mathrm{Bernoulli}(p)$ independently per bit.
\end{itemize}
All three are fully vectorizable via
\texttt{Tensor.view(torch.int32)} on contiguous FP32 parameters,
completing in $O(N)$ time with $O(1)$ auxiliary state per tensor.

\section{Budget Definition}
\label{app:budget}

The \emph{bit-level budget} fixes the total number of modified bits
$B$, implying $K=\lfloor B/x\rfloor$ modified floats.
The \emph{float-level budget} modifies $K$ float elements, touching
$K\cdot x$ bits.
We use a bit-level budget throughout the paper for fairness across
different values of $x$.

\section{Additional Fidelity Metrics}
\label{app:extra_metrics}

\GS does not yield the single largest average Hamming distance;
that distinction is held by \PMabbr\ by a negligible margin.
However, \GS achieves essentially the same level of payload
corruption while preserving substantially better weight fidelity:
its average $W_1$ is about $25\times$ smaller than \PMabbr\ and
about $42\times$ smaller than \PTQabbr.
The main empirical advantage of \GS is therefore near-maximal
corruption with markedly cleaner distribution-level behaviour.

\section{Reproducibility}
\label{app:repro}

The anonymized replication package~\citep{anonymous2026grayshieldreplication}
contains the code, configuration grids, figure-generation scripts, curated
payload data, dataset metadata, structured result artifacts, and provenance
information needed to reproduce the main figures and tables in this paper.

The curated payload data include the two representative payloads used in the
completed main-paper runs---the low-entropy JavaScript sample
\texttt{c37{\ldots}a88.js.malware} and the high-entropy executable
\texttt{5704{\ldots}6d0.exe.malware}---together with four supplementary
payloads and a metadata manifest recording file type, size, entropy, and paper
role. The payload bundle is included in the anonymized replication package.

The repository also includes a structured results bundle under
\texttt{release\_results/}. This bundle separates the refreshed RQ1 rerun from
the completed RQ2--RQ4 results, and includes sanitization configuration files,
attacker encoder/decoder implementations, figure-generation utilities, and a
merged provenance manifest tracing each published result back to its source run.

\smallskip
\noindent
\textbf{Seeds:} run seed~1; evaluation seed~42;
$n_{\mathrm{eval}}=512$ per run.\\
\textbf{Software:} Python~3.13, PyTorch~2.9.1, Transformers~5.0.0.\\
\textbf{Hardware:} Ubuntu~22.04.5; $2\times$ Intel Xeon Gold~6430
(128 logical cores); 503\,GiB RAM;
$2\times$ NVIDIA RTX PRO~6000 Blackwell (96\,GB VRAM each).


\end{document}